\documentclass[letterpaper,twocolumn,10pt]{article}
\usepackage{preprint}

\usepackage{cite}
\usepackage{amsmath}
\usepackage{algorithmic}
\usepackage{url}
\usepackage[ruled, vlined, linesnumbered]{algorithm2e}

\usepackage{amsthm}
\usepackage{amssymb}

\usepackage{subcaption}

\usepackage{booktabs}
\usepackage{graphicx}

\usepackage{enumitem}

\usepackage{tcolorbox}

\usepackage{pifont}

\usepackage{makecell}

\theoremstyle{definition}
\newtheorem{definition}{Definition}[section]

\newtheorem{theorem}{Theorem}[section]
\newtheorem{lemma}{Lemma}[section]

\makeatletter
\newcommand{\Upon}[1]{\STATE \textbf{upon} #1 \textbf{do}\begin{ALC@g}}
\newcommand{\EndUpon}{\end{ALC@g}\STATE \textbf{end upon}}
\makeatother

\SetKwFor{AlgUpon}{upon}{do}{}
\SetKwFor{AlgConstruct}{construct}{with}{}
\makeatletter
\newcommand{\AlgTwoEUnnumbered}[1]{\par
  \begingroup
  \algocf@seteveryparnl{\relax}#1\endgroup
}
\newcommand{\AlgTwoERole}[1]{\AlgTwoEUnnumbered{\(\rhd\) #1\;}}
\newlength{\AlgTwoESepVSkip}
\newcommand{\AlgTwoESep}{\par
  \vspace{\dimexpr\AlgTwoESepVSkip-\skiphlne\relax}\nointerlineskip
  \hbox to \hsize{\hskip-\algomargin
    \rule{\algowidth}{0.4pt}\hss
  }\nointerlineskip
  \vspace{\AlgTwoESepVSkip}}
\makeatother

\newcommand{\QuarkShard}{\textsc{QuarkShard}}

\newcommand{\setS}{\mathcal{S}}

\newcommand{\Chain}[1]{C_{#1}}
\newcommand{\Shard}[1]{S_{#1}}
\newcommand{\Block}[2]{B_{#1}^{#2}}

\newcommand{\bal}[2]{\mathsf{bal}_{#1}(#2)}

\newcommand{\DeltaSet}[3]{\mathsf{CB}_{#1\to #2}^{#3}}

\newcommand{\DeltaProof}[3]{\pi_{#1\to #2}^{#3}}

\newcommand{\send}{\mathsf{Send}}
\newcommand{\deliver}{\mathsf{Deliver}}
\newcommand{\verify}{\mathsf{Verify}}
\newcommand{\csrd}[2]{\mathsf{CSRD}_{#1\to #2}}
\newcommand{\Send}[1]{\mathsf{Send}(#1)}
\newcommand{\Deliver}[1]{\mathsf{Deliver}(#1)}
\newcommand{\Verify}[1]{\mathsf{Verify}(#1)}

\newcommand{\ECEnc}{\mathsf{EC.Encode}}
\newcommand{\ECDec}{\mathsf{EC.Decode}}
\newcommand{\COM}{\mathsf{Com}}
\newcommand{\COMCommitOpen}{\COM.\mathsf{CommitOpen}}
\newcommand{\COMVerify}{\COM.\mathsf{Verify}}
\newcommand{\TSig}{\mathsf{TSig}}
\newcommand{\TSigSignShare}[2]{\TSig.\mathsf{SignShare}_{#1}(#2)}
\newcommand{\TSigVerifyShare}[3]{\TSig.\mathsf{VerifyShare}(#1,#2,#3)}
\newcommand{\TSigCombine}[2]{\TSig.\mathsf{Combine}(#1,#2)}
\newcommand{\TSigVerify}[2]{\TSig.\mathsf{Verify}(#1,#2)}

\newcommand{\WaitUntil}{\textbf{wait until}}

\newcommand{\BFTLatency}{L_{\mathrm{BFT}}}
\newcommand{\CSRDLatency}{L_{\mathrm{CSRD}}}
\newcommand{\ImportLatency}{L_{\mathrm{imp}}}

\newcommand{\anglebk}[1]{\left\langle #1 \right\rangle}
\newcommand{\bigbk}[1]{\left\{ #1 \right\}}

\begin{document}

\date{}

\title{\Large \bf \QuarkShard{}: Decomposing Cross-Shard Requests into Single-Input Transactions}

\author{
{\rm Changyu Duan}\\
Tsinghua University
\and
{\rm Zhixuan Fang}\\
Tsinghua University
}
\maketitle

\begin{abstract}
Sharding blockchains enhance scalability by distributing the workload across multiple shards.
Within sharding blockchains, securely and efficiently processing cross-shard user requests remains a pivotal challenge.
Traditional sharding blockchain protocols execute cross-shard requests under a processing paradigm that treats each user request as a monolithic entity to preserve its atomicity, inherently introducing significant intra-shard BFT execution overhead and cross-shard communication overhead.

We propose \QuarkShard{}, a sharding protocol that breaks away from this traditional paradigm to achieve intra-shard and cross-shard overheads below the lower bounds imposed by conventional approaches.
\QuarkShard{} decomposes cross-shard requests into single-input-shard transactions, where each transaction requires consensus and processing within only a single shard, thereby substantially reducing total request processing overhead.
Meanwhile, through a carefully designed request decomposition algorithm, \QuarkShard{} guarantees fund atomicity or location atomicity during request execution.
\QuarkShard{} also enhances the security of blockchain sharding systems against MEV attacks and flooding attacks.
Empirical evaluation across a wide-area network spanning four AWS regions demonstrates that \QuarkShard{} achieves superior performance, attaining a peak throughput of $1024$\,k requests per second with sub-$4$-second latency.
Benchmark comparisons show that \QuarkShard{} achieves up to $210.7\%$ and $314.8\%$ of the maximum throughputs of Kronos and 2PC, respectively.
\end{abstract}

\section{Introduction}

As blockchain technology continues to evolve, enhancing scalability has become an important challenge.
Among various solutions, sharding architectures are widely regarded as an effective approach~\cite{kokoris2018omniledger,zamani2018rapidchain,al2017chainspace,wang2019monoxide,huang2022brokerchain,hellings2021byshard,liu2024cherubim,liu2024kronos,liu2026logos,dang2019towards} and have been implemented in multiple industrial-grade systems~\cite{harmony2018,skidanov2019nightshade,polkadot2020}.

In traditional blockchains, every replica must store the complete blockchain state and participate in a global consensus protocol to process all transactions across the network.
In contrast, sharding partitions both the blockchain state and the transaction execution process across multiple shards.
Each shard is maintained by a subset of replicas and processes only a fraction of the system workload, thereby significantly reducing the per-replica load.
By enabling horizontal scaling, sharding allows throughput to scale dynamically as more nodes and shards join the system.

A central challenge in sharding blockchains is how to process user requests that access state across multiple shards without compromising security or scalability.
In sharding blockchains, each shard maintains only part of the global state, such as a subset of account balances.
Therefore, a user may intend to draw funds from multiple shards and pay accounts located in multiple destination shards.
Prior studies have shown that such cross-shard payments can dominate the workload: when the system has 16 shards, more than \(99\%\) of requests would access multiple shards~\cite{zamani2018rapidchain}.
For clarity, we use \emph{request} for the user's payment intent and \emph{transaction} for the object processed by the sharding blockchain protocol in this work.\footnote{Prior work often uses transaction-related terminology for both notions; we separate them because \QuarkShard{} deliberately decouples user requests from the transactions processed by the sharding blockchain protocol.}

To process a cross-shard user request, existing works typically require coordinating the involved shards to reach consensus on and execute the request according to specific rules.
Ultimately, this coordination must guarantee request atomicity: the request should either take effect on all involved shards, or leave the system as if the request had not been executed.
Specifically, the execution of a cross-shard request can be decomposed into two main components: intra-shard consensus and cross-shard communication.

Generally, sharding blockchains run a Byzantine Fault Tolerance (BFT) protocol within each shard to process and reach consensus on transactions.
Processing each transaction incurs corresponding communication, computation, and storage overheads, which we collectively refer to as the intra-shard consensus overhead (IS-overhead).
Many current protocols rely on two-phase commit (2PC) based schemes~\cite{al2017chainspace,dang2019towards,hellings2021byshard,liu2024cherubim}, incurring an IS-overhead of \((2a+b)\mathcal{B}\) or larger for a request involving \(a\) input shards and \(b\) output shards, where \(\mathcal{B}\) denotes the intra-shard BFT execution cost required to process a single transaction.
State-of-the-art general cross-shard protocols, such as Kronos~\cite{liu2024kronos} and Logos~\cite{liu2026logos}, reduce this overhead to \((a+b)\mathcal{B}\).

To synchronize execution results across multiple shards, shards must transmit cryptographic certificates to one another, incurring what we term cross-shard communication overhead (CS-overhead).
Existing protocols~\cite{dang2019towards,hellings2021byshard,al2017chainspace} typically require all-to-all inter-shard broadcasting to guarantee message reachability, introducing a CS-overhead of \(O(n^2r\lambda)\), where \(n\) denotes the number of replicas per shard, \(r\) represents the number of transactions in a batch, and \(\lambda\) is cryptographic parameter size.
Kronos~\cite{liu2024kronos} reduces this overhead to \(O(nr\lambda)\) by leveraging erasure codes~\cite{luby2001efficient}.
Logos~\cite{liu2026logos} further lowers the pure cross-shard communication overhead to \(O(r\lambda)\) by offloading most of the communication into intra-shard messaging (though the total communication overhead remains \(O(nr\lambda)\)).

These traditional cross-shard protocols share a common processing paradigm: the protocol treats each user request as a monolithic entity for consensus and execution, directly guaranteeing its atomicity through protocol-level mechanisms.
However, this paradigm imposes inherent overheads. Prior work has proven that the intra-shard consensus overhead cannot fall below \((a+b)\mathcal{B}\)~\cite{liu2024kronos,liu2026logos}, as each involved shard must execute at least one consensus round for the request.
Similarly, because a certificate must be broadcast to recipient shards for every request, the lower bound for the total communication overhead remains \(\Omega(nr\lambda)\), and the lower bound for the CS-overhead is \(\Omega(r\lambda)\)~\cite{liu2026logos}.

These lower bounds motivate us to reconsider a fundamental question: \textbf{Is treating each user request as an indivisible entity strictly necessary?}

\begin{table*}[t]
\centering
\caption{Comparison of \QuarkShard{} and BFT-based Sharding Blockchain Systems.}
\label{tab:protocol-comparison}
\footnotesize
\setlength{\tabcolsep}{13.3pt}
\renewcommand{\arraystretch}{1.12}
\newcommand{\ProtocolComparisonHeader}[1]{\raisebox{-1.4ex}[0pt][0pt]{\(\vcenter{\hbox{\shortstack[c]{#1}}}\)}}
\begin{tabular}{c|c|c|c|c|c}
\hline
\rule{0pt}{3ex}\ProtocolComparisonHeader{\textbf{Protocol}}
& \multicolumn{2}{c|}{\textbf{IS-Overhead}\textsuperscript{\(\dagger\)}}
& \ProtocolComparisonHeader{\textbf{CS-Overhead}\textsuperscript{\(\ddagger\)}\\(for \(r\) transactions)}
& \ProtocolComparisonHeader{\textbf{CS Front-Running}\\\textbf{Free}\textsuperscript{\(\circ\)}}
& \ProtocolComparisonHeader{\textbf{Order-Fairness BFT}\\\textbf{Compatible}\textsuperscript{\(\star\)}}
\\
\cline{2-3}
\rule{0pt}{2.5ex}
& \textbf{valid Request}
& \textbf{invalid Request}
&
&
&
\\
\hline
OmniLedger~\cite{kokoris2018omniledger}
& \((a+b)\mathcal{B}\)
& \(2a\mathcal{B}\)
& \(O(n^2r\lambda)\)
& \ding{56}
& \ding{56}
\\
RapidChain~\cite{zamani2018rapidchain}
& \((a+b)\mathcal{B}\)\textsuperscript{\(\diamond\)}
& \(x\mathcal{B}\)\textsuperscript{\(\diamond\)}
& \(O(n^2r\lambda)\)
& \ding{56}
& \ding{56}
\\
Chainspace~\cite{al2017chainspace}
& \((2a+b)\mathcal{B}\)
& \(2a\mathcal{B}\)
& \(O(n^2r\lambda)\)
& \ding{56}
& \ding{56}
\\
ByShard~\cite{hellings2021byshard}
& \((2(a+b)-1)\mathcal{B}\)
& \((2x+1)\mathcal{B}\)
& \(O(n^2r\lambda)\)
& \ding{56}
& \ding{56}
\\
AHL~\cite{dang2019towards}
& \((2(a+b)+3)\mathcal{B}\)
& \((2(a+b)+3)\mathcal{B}\)
& \(O(n^2r\lambda)\)
& \ding{56}
& \ding{56}
\\
Kronos~\cite{liu2024kronos}
& \((a+b)\mathcal{B}\)
& \(2x\mathcal{B}\)
& \(O(nr\lambda)\)
& \ding{56}
& \ding{56}
\\
Logos~\cite{liu2026logos}
& \((a+b)\mathcal{B}\)
& \(a\mathcal{C}\)
& \(O(r\lambda)\)
& \ding{56}
& \ding{56}
\\
\hline
\QuarkShard{}
& \(a\mathcal{B}\)
& \makecell[c]{\(x\mathcal{B}\)\\\((x+\mathbf{1}[x>0])\mathcal{B}\)\textsuperscript{\(\S\)}}
& \makecell[c]{\(O(nc\lambda)\)\\\(O(c\lambda)\)\textsuperscript{\(\S\)}}
& \ding{52}
& \ding{52}
\\
\hline
\end{tabular}
\begin{minipage}{\textwidth}
\scriptsize
\vspace{0.6em}
\ding{52} and \ding{56} denote that a property is satisfied and not satisfied, respectively.
\begin{description}[leftmargin=1.8em,labelwidth=1.25em,labelsep=0.35em,itemsep=0.05em,topsep=0.15em]
    \item[\(\dagger\)] \(a\) and \(b\): the numbers of input and output shards for a request. \(\mathcal{B}\): the BFT execution cost for a single transaction. \(x\) \((< a)\): the number of input shards that have consumed or locked funds prior to the rejection of an invalid request.
    \item[\(\ddagger\)] \(r\): the number of transactions in a batch. \(c\): the number of output accounts involved in these transactions. \(\lambda\): the cryptographic parameter size. This column indicates the communication overhead required to guarantee that every honest node in the destination shard eventually receives the identical, reliably delivered data~\cite{liu2026logos}.
    \item[\(\circ\)] \emph{CS Front-Running Free} indicates whether the cross-shard request processing scheme eliminates time windows in which an adversary can observe transaction intent in advance and perform front-running using intra-shard transactions (Definition~\ref{def:CSFFF}).
    \item[\(\star\)] \emph{Order-Fairness BFT Compatible} indicates whether the system is compatible with non-sharding Order-Fairness BFT consensus protocols (Definition~\ref{def:OFBC}).
    \item[\(\diamond\)] RapidChain only supports single-destination output requests (\(b=1\)) and does not support all-or-nothing atomicity. The term \(x\mathcal{B}\) represents its overhead for invalid requests when achieving fund atomicity analogous to that defined in this paper.
    \item[\(\S\)] The terms \(x\mathcal{B}\) and \((x+\mathbf{1}[x>0])\mathcal{B}\) represent the IS-overhead under fund atomicity and location atomicity, respectively. \(\mathbf{1}[x>0]\) is the indicator function that equals 1 if \(x>0\) and 0 otherwise. The term \(O(c\lambda)\) represents the reduced overhead when employing a Logos-based optimization scheme~\cite{liu2026logos}.
\end{description}
\end{minipage}
\end{table*}

The answer is negative. We propose \QuarkShard{}, which splits each cross-shard user request into finer-grained pieces, where each piece is a transaction with a unique input shard, termed a \emph{Quark transaction}.
The protocol no longer reaches consensus on monolithic user requests; instead, it processes only Quark transactions, while request-level atomicity is guaranteed via a carefully designed request splitting algorithm.
Through this decomposition, \QuarkShard{} bypasses the theoretical lower bounds of traditional paradigms, achieving significantly lower intra-shard and cross-shard overheads.

Specifically, in \QuarkShard{}, the system exclusively executes Quark transactions.
Each Quark transaction debits balances from only a single shard, although it may generate credits for other shards.
Because a Quark transaction has a unique input shard, its validity can be determined solely by that shard.
Consequently, processing a Quark transaction requires consensus only within its input shard; output shards do not need to execute an additional consensus round for the original transaction.
Instead, output shards simply trust the consensus result of the input shard and import the balance increments of the output accounts with minimal overhead.

\begin{figure}[t]
    \centering
    \includegraphics[width=0.47\textwidth]{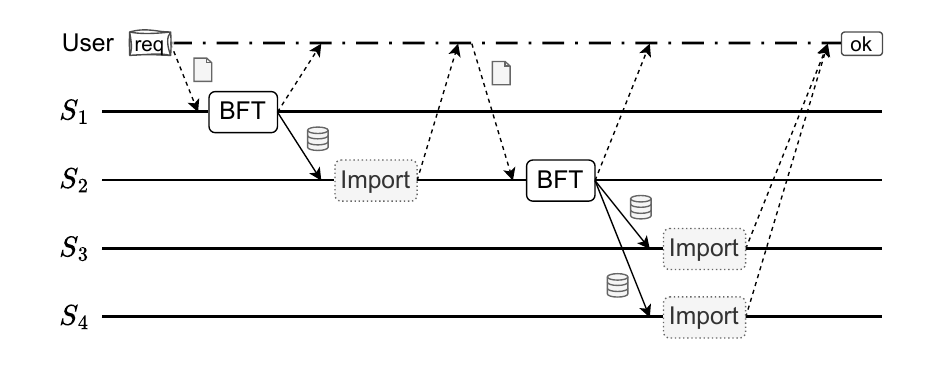}
    \caption{\QuarkShard{} Workflow. The user executes a single Quark transaction per input shard to fully process the request, incurring an IS-overhead of merely \(a\mathcal{B}\). For any Quark transaction, its generated outputs are imported by the target destination shards with negligible overhead (depicted as gray blocks), eliminating the need for an additional BFT consensus round on the original request.}
    \label{fig:quarkshardflow}
\end{figure}

\begin{figure}[t]
    \centering
    \includegraphics[width=0.47\textwidth]{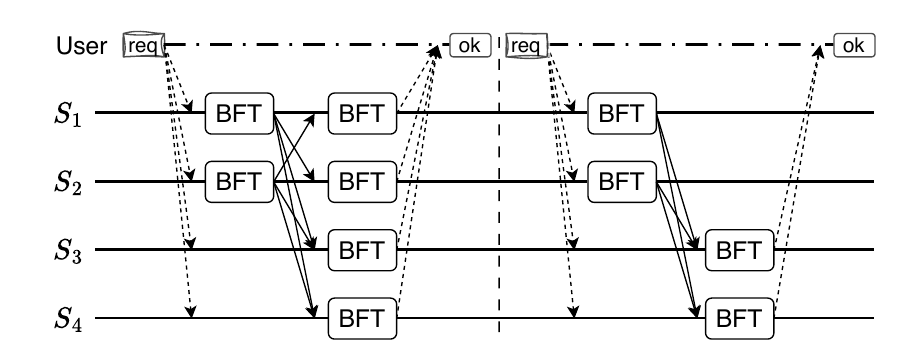}
    \caption{2PC (left) and Kronos (right) Workflow. 2PC first executes BFT consensus across all input shards to lock the required inputs; upon successful locking, it executes consensus across all input or output shards to commit input deductions or credit output balances, incurring an IS-overhead of \((2a+b)\mathcal{B}\).
    Kronos first executes BFT consensus across all input shards to consume the inputs; upon success, it executes consensus across all output shards to apply the outputs, incurring an IS-overhead of \((a+b)\mathcal{B}\).}
    \label{fig:2pcflow}
\end{figure}

To execute a cross-shard request, the user divides it into a sequence of Quark transactions.
We refer to this process as \emph{user request compilation}.
A carefully designed compilation algorithm guarantees the request's \emph{fund atomicity}: either the request executes entirely, or all funds remain under the control of their original account holders.
Furthermore, honest users can achieve a stronger guarantee, \emph{location atomicity}, which eliminates any side effects of request execution on funds controlled by honest accounts.

As an example (Figure~\ref{fig:quarkshardflow}), consider a user holding inputs in shards \(\Shard{1}\) and \(\Shard{2}\) who wishes to pay recipient accounts in \(\Shard{3}\) and \(\Shard{4}\).
The base compilation algorithm (Algorithm~\ref{alg:basic-compilation}) first issues a Quark transaction to transfer the inputs in \(\Shard{1}\) to the same account in \(\Shard{2}\).
Subsequently, it issues another Quark transaction from \(\Shard{2}\) that transfers all gathered inputs to the target accounts in \(\Shard{3}\) and \(\Shard{4}\).
On the success path, the base compilation algorithm commits one Quark transaction per input shard, incurring an IS-overhead of \(a\mathcal{B}\).
For comparison, the execution workflows of 2PC and Kronos for processing this request are illustrated in Figure~\ref{fig:2pcflow}.

Regarding CS-overhead, \QuarkShard{} requires a shard to broadcast the cross-shard outputs of Quark transactions to the target shards.
However, the cross-shard effects of a large volume of transactions can be aggregated by account.
Through this mechanism, \QuarkShard{} reduces the CS-overhead of \(r\) Quark transactions to \(O(nc\lambda)\), where \(c\) represents the number of distinct destination accounts in distinct shards across these \(r\) transactions.
By incorporating the cross-shard communication optimization from Logos~\cite{liu2026logos}, the CS-overhead can be further lowered to \(O(c\lambda)\).

The design of \QuarkShard{} not only reduces the overhead of processing cross-shard requests,
but also enhances the security and robustness of sharding blockchain systems.
In blockchain environments, adversaries can exploit transaction ordering to extract Maximal Extractable Value (MEV) at honest users' expense~\cite{daian2020flash,qin2022quantifying}.
sharding blockchains are particularly susceptible to these attacks~\cite{zhang2023front}.
\QuarkShard{} mitigates the impact of MEV attacks on sharding blockchains at two distinct levels.

First, in traditional cross-shard protocols, a cross-shard request must reach consensus in the input shards before undergoing consensus in the output shards.
This sequential workflow inherently imposes higher latency on cross-shard requests compared to intra-shard transactions, expanding the exposure to front-running attacks.
An attacker observing a published cross-shard request during its processing in the input shard can issue a front-running intra-shard transaction in the output shard, executing it before the cross-shard request completes~\cite{zhang2023front}.
In \QuarkShard{}, a user's cross-shard request is never publicly broadcast as a monolithic intent.
Consequently, when an adversary observes a user transferring funds from one shard to another, they cannot ascertain what subsequent transaction, if any, the user will initiate.
This effectively closes the extra front-running attack window caused by latency disparities between cross-shard and intra-shard execution.

Second, in traditional cross-shard protocols, achieving order-fairness across all requests requires additional cross-shard ordering mechanisms due to multi-shard involvement~\cite{zhang2023front,hong2023prophet}.
In contrast, each shard in \QuarkShard{} processes only the Quark transactions without having to coordinate with other shards, the intra-shard consensus can directly employ standard order-fairness protocols designed for non-sharding blockchains~\cite{kelkar2020order,kelkar2023themis} and ensuring fair ordering for all Quark transactions in the same shard.
It is worth noting that \QuarkShard{} does not eliminate MEV entirely; rather, it shrinks the MEV attack surface introduced by sharding and offers superior compatibility with conventional anti-MEV mechanisms.

Existing work indicates that sharding blockchains are more vulnerable to flooding attacks compared to non-sharding systems~\cite{liu2026logos}.
In most 2PC-based protocols and Kronos, a failed cross-shard request incurs a wasted IS-overhead of at least \(2x\mathcal{B}\), where \(x\) is the number of input shards that have consumed or locked funds prior to the rejection of the invalid request.
As a result, an adversary can exhaust system resources by flooding the system with invalid cross-shard requests.
Logos mitigates this by decoupling the broadcast and agreement phases of consensus, allowing invalid requests to fail early during the broadcast phase and reducing the wasted overhead to \(a\mathcal{C}\), where \(\mathcal{C}\) denotes the cost of broadcasting a transaction in consensus.\footnote{Although Logos claims an overhead of \(x\mathcal{C}\), its algorithm and security proof explicitly require all input shards to execute csdRBC, rather than only valid input shards; hence, the actual overhead is \(a\mathcal{C}\).}
In \QuarkShard{}, an invalid request incurs an overhead of \(x\mathcal{B}\) when relying on fund atomicity.
If location atomicity is required, an additional refund transaction is executed, resulting in an overhead of \((x+\mathbf{1}[x>0])\mathcal{B}\).
In scenarios where broadcast bandwidth is the primary bottleneck, \QuarkShard{} can achieve lower practical overhead than Logos.
A comparison between \QuarkShard{} and existing sharding blockchain architectures is presented in Table~\ref{tab:protocol-comparison}.

\textbf{Our contributions are summarized as follows:}
\begin{itemize}[leftmargin=*]
    \item We propose \QuarkShard{}, which breaks away from the traditional cross-shard request processing paradigm that treats each user request as a monolithic entity. By compiling and decomposing a cross-shard user request into a sequence of single-input-shard Quark transactions, \QuarkShard{} achieves an IS-overhead of \(O(a\mathcal{B})\) and an optimal CS-overhead of \(O(c\lambda)\), both below the theoretical lower bounds imposed by the traditional paradigm. Meanwhile, \QuarkShard{} enhances the security of sharding blockchain systems in aspects such as resistance to MEV and flood attacks.
    \item We present a carefully designed request compilation algorithm that decomposes a cross-shard request into multiple Quark transactions. The base algorithm guarantees fund atomicity, and its IS-overhead on the success path is \(a\mathcal{B}\). Furthermore, a refund algorithm ensures location atomicity for honest users.
    \item We implemented a prototype of \QuarkShard{} and evaluated its performance on an empirical wide-area network (WAN) spanning four AWS regions, benchmarking it against 2PC and Kronos. Experimental results demonstrate that \QuarkShard{} exhibits exceptional performance, achieving a peak throughput of $1024$\,k requests per second across $96$ nodes and $24$ shards while consistently maintaining sub-$4$-second latency. Compared to Kronos and 2PC, \QuarkShard{} achieves up to $210.7\%$ and $314.8\%$ of their maximum throughputs, respectively.
\end{itemize}

\section{Preliminaries}
\label{sec:preliminaries}

This section defines the sharding blockchain model, the user requests and Quark transactions, the building blocks of \QuarkShard{}, and the network and adversary model.

\subsection{Sharding Blockchain Model}
\label{sec:prelim-sharding-model}

The system consists of \(m\) shards, \(\setS=\{\Shard{1},\ldots,\Shard{m}\}\).
Each shard consists of \(n\) replicas, among which at most \(f\) are Byzantine, satisfying \(n\geq 3f+1\).\footnote{We assume the existence of committee assignment and reconfiguration mechanisms that ensure each shard satisfies this requirement throughout system execution with overwhelming probability~\cite{luu2016secure,kokoris2018omniledger,zamani2018rapidchain}.}
A shard \(\Shard{i}\) maintains a local chain \(\Chain{i}=(\Block{i}{0},\Block{i}{1},\ldots)\) and a replicated account state.

In blockchain systems, common account models include the account-balance model~\cite{wood2014ethereum} and the Unspent Transaction Output (UTXO) model~\cite{nakamoto2008bitcoin}.
For simplicity, this paper adopts the account-balance model, although \QuarkShard{} is equally applicable to the UTXO model.
An account is uniquely identified by its cryptographic public key \(pk\) and may hold balances across multiple shards.
Let \(\bal{i}{pk}\) denote the balance of account \(pk\) recorded in shard \(\Shard{i}\).

\subsection{User Requests and Quark Transactions}
\label{sec:prelim-user-request-quark-tx}

A \emph{user request} represents a payment intent whose inputs and outputs may span multiple shards.

\begin{definition}[User request]
\label{def:user-request}
A user request \(req\) is defined as a tuple \(\anglebk{I,O}\), where \(I\) is a set of inputs and \(O\) is a set of outputs.
Each input or output is denoted as \(\anglebk{pk,v,\Shard{i}}\),
which specifies an account \(pk\), an amount \(v\), and the shard \(\Shard{i}\) where the corresponding balance is or should be recorded.
\end{definition}

A request is considered \emph{well-formed} if the sum of its input values equals the sum of its output values, and the users authorize every input.

\QuarkShard{} does not directly process user requests; instead, it executes \emph{Quark transactions}.

\begin{definition}[Quark transaction]
\label{def:quark-transaction}
A transaction \(tx\) is defined as a Quark transaction of input shard \(\Shard{i}\) if all balances it debits are recorded in \(\Shard{i}\).
Quark transactions have the form \(\anglebk{I, O, nonce, sig}\), where \(I\) is a set of inputs and \(O\) is a set of outputs, \(nonce\) is a unique identifier for the transaction, and \(sig\) is a set of signatures for the transaction of the input accounts.
All inputs of a Quark transaction must belong to the same shard, while its outputs may belong to any shard.
\end{definition}

A Quark transaction is considered \emph{valid} in shard \(\Shard{i}\) if all inputs are in \(\Shard{i}\) and enough, the sum of its input values equals the sum of its output values, the \(nonce\) is unique, and all signatures in \(sig\) are valid for the corresponding input accounts.

To process a cross-shard request, a user \emph{compiles} it into multiple Quark transactions and submits them sequentially to their respective shards.

\begin{definition}[Request compilation algorithm]
\label{def:request-compilation}
A request compilation algorithm is a procedure that takes a well-formed user request as input, decomposes it into a sequence of Quark transactions, and submits them to their respective shards according to prescribed rules and ordering constraints.
\end{definition}

\subsection{BFT Protocol}
\label{sec:prelim-bft-proof}

Each shard uses a BFT protocol to commit a totally ordered sequence of blocks.
The BFT protocol must satisfy the following properties if at most \(f < n/3\) replicas are Byzantine:

\begin{itemize}[leftmargin=*]
    \item \textbf{Agreement.} Honest replicas never commit different blocks at the same height.
    \item \textbf{Validity.} A committed block satisfies the shard's deterministic block-validity predicate.
    \item \textbf{Liveness.} Blocks are continuously committed, with at least a fixed fraction of them being proposed by honest nodes.
\end{itemize}

\subsection{Cross-Shard Reliable Delivery}
\label{sec:prelim-csrd}

\QuarkShard{} requires reliable delivery of intra-shard consensus results from one shard to all honest nodes in another shard.
To address this requirement, we define a new cross-shard communication primitive named Cross-Shard Reliable Delivery (CSRD).

A CSRD instance from shard \(\Shard{i}\) to shard \(\Shard{j}\) is identified by a tag \(\tau\) and provides the following interface:
\begin{itemize}[leftmargin=*]
    \item \(\csrd{i}{j}.\Send{\tau,M}\), invoked by replicas in \(\Shard{i}\) to send a message \(M\);
    \item \(\csrd{i}{j}.\Deliver{\tau}\rightarrow(M,\pi)\), invoked by replicas in \(\Shard{j}\) to output the delivered message \(M\) and its corresponding certificate \(\pi\);
    \item \(\csrd{i}{j}.\Verify{\tau,\pi}\), which publicly verifies the certificate \(\pi\) without obtaining \(M\).
\end{itemize}

CSRD must satisfy the following properties if at most \(f < n/3\) replicas are Byzantine in each shard:

\begin{itemize}[leftmargin=*]
    \item \textbf{Correctness.} If all honest source replicas invoke \(\send\) on the same message \(M\), then all honest destination replicas will eventually obtain \(M\) and a certificate \(\pi\) via \(\deliver\), such that \(\pi\) is accepted by \(\verify\).
    \item \textbf{Validity.} If any replica obtains a certificate \(\pi\) accepted by \(\verify\), then at least one honest source replica must have invoked \(\send\).
\end{itemize}

In Section~\ref{app:csrd}, we present a concrete construction of CSRD that leverages threshold signatures, cryptographic commitments, and erasure codes (EC).

\subsection{Network and Adversary Model}
\label{sec:prelim-net-adversary}

The network model can be either asynchronous or partially synchronous; the underlying network assumptions can be adapted by adjusting the intra-shard BFT protocol and the CSRD primitive used to construct \QuarkShard{}.
All messages are transmitted over authenticated channels, and eventual message delivery is guaranteed.
In an asynchronous network, message propagation delays can be arbitrarily large.
In a partially synchronous network, there exists a Global Stabilization Time (GST) such that, after GST, all messages transmitted between honest nodes are guaranteed to be delivered within a known bounded delay.

The adversary may corrupt and fully control up to \(f < n/3\) replicas per shard, causing them to arbitrarily deviate from the protocol.
Subject to the constraints of the network model, the adversary can arbitrarily delay or reorder messages, but cannot forge or suppress messages sent by other replicas.

\section{\QuarkShard{} Protocol}
\label{sec:protocol}

In this section, we present the details of \QuarkShard{}, with its primary procedures outlined in Algorithm~\ref{alg:quark}.
The protocol reaches consensus on Quark transactions within their input shards, leverages the CSRD primitive to deliver cross-shard account credits to output shards, and imports these account credits into output shards with minimal overhead.
This section focuses on the execution and processing of Quark transactions, while the user request compilation scheme is detailed in Section~\ref{sec:userlevel}.

\subsection{Block Structure and Execution}

In \QuarkShard{}, each shard runs an intra-shard BFT consensus protocol to agree on a chain of blocks, denoted as \(\Chain{i}=(\Block{i}{0},\Block{i}{1},\ldots)\).
Each block takes the form \(\anglebk{meta,F,txs}\), where \(meta\) includes the parent block hash, state root, transaction root, and other standard header fields, \(F\) denotes the block's import frontier (detailed below), and \(txs\) is an ordered sequence of Quark transactions contained within the block.

Upon block commitment (line~\ref{alg:quark:committing:begin}--\ref{alg:quark:committing:end}), replicas execute all transactions contained therein.
For intra-shard inputs and outputs targeting the current shard, the corresponding account balances are directly debited or credited.
For cross-shard outputs targeting external shards, replicas aggregate their credits into a \emph{cross-shard credit batch}, which is subsequently transmitted to the destination shard via the CSRD primitive after block execution.
Specifically, all account credits generated by shard \(\Shard{i}\) at block height \(h\) and destined for shard \(\Shard{j}\) are aggregated into a batch \(\DeltaSet{i}{j}{h}\), where \(\DeltaSet{i}{j}{h}[pk]\) represents the balance credit for account \(pk\) in the destination shard.

For clarity of presentation and proof, we require that even empty credit batches be transmitted to destination shards; we present a straightforward optimization in Section~\ref{sec:discussion:emptybatch} that eliminates the overhead incurred by empty credit batches.

Destination shards store incoming credit batches in a local credit pool and import them into subsequent blocks (line~\ref{alg:quark:csrd:begin}--\ref{alg:quark:csrd:end}).
Because credit batches are uniquely produced through input-shard consensus and reliably delivered via CSRD, destination shards do not need to execute an additional consensus round on the contents of each credit batch.
Instead, each block records an import frontier vector \(F\) of length equal to the number of shards, where
\begin{align*}
\Block{j}{h}.F[i]=&\max\{h':\forall r\in[0,h'),\\
&\DeltaSet{i}{j}{r}\text{ has been imported by the end of }\Block{j}{h}\}.
\end{align*}
\noindent For convenience, we define \(\Block{j}{-1}.F[i]=0\) and \(\Block{j}{h}.F[j]=0\).

Upon block commitment, replicas wait until all credit batches selected for import have arrived, and apply the corresponding balance credits to their respective accounts before executing any transactions.

\subsection{Block Proposal and Validation}

\begin{algorithm}[htbp]
\footnotesize
\DontPrintSemicolon
\AlgTwoERole{\textbf{as a replica in shard} \(\Shard{i}\)}
\AlgUpon{receiving a Quark transaction \(tx\) from user\label{alg:quark:receiving:begin}}{
    \eIf{\(tx\) is valid}{
        Store \(tx\) into the transaction pool.\;
    }{
        Response to the user that \(tx\) is rejected.\;
    }\label{alg:quark:receiving:end}
}
\AlgTwoESep
\AlgUpon{proposing a new block \(\Block{i}{h}\)\label{alg:quark:proposing:begin}}{
    Let \(F\gets\Block{i}{h-1}.F\); \(W\gets\varnothing\).\;
    \For{each source \(\Shard{k}\ne\Shard{i}\)}{
        Let \(h_k\) be the largest value such that for all \(r\in[F[k],h_k)\), \((\DeltaSet{k}{i}{r},\DeltaProof{k}{i}{r})\) exists in the credit pool.\;
        For all \(r\in[F[k],h_k)\), add \(\DeltaProof{k}{i}{r}\) to \(W\).\;
        Let \(F[k]\gets h_k\).\;
    }
    Let \(\Block{i}{h}.F\gets F\).\;
    Select valid Quark transactions in order from the transaction pool and add them to \(\Block{i}{h}.txs\).\;
    Fill in \(\Block{i}{h}.meta\).\;
    Propose \(\langle\Block{i}{h},W\rangle\).\;\label{alg:quark:proposing:end}
}
\AlgTwoESep
\AlgUpon{verifying proposal \(\langle\Block{i}{h},W\rangle\)\label{alg:quark:verifying:begin}}{
    \If{any field in \(\Block{i}{h}.meta\) except the state root is invalid}{
        Refuse to sign for \(\Block{i}{h}\).\;
    }
    \If{\(\Block{i}{h}.F[k]<\Block{i}{h-1}.F[k]\) for any \(k\)}{
        Refuse to sign for \(\Block{i}{h}\).\;
    }
    \For{each \(k\ne i\) and \(r\in[\Block{i}{h-1}.F[k],\Block{i}{h}.F[k])\)}{
        \If{\(\DeltaProof{k}{i}{r}\not\in W\) or \(\csrd{k}{i}.\Verify{r,\DeltaProof{k}{i}{r}}\) is false}{
            Refuse to sign for \(\Block{i}{h}\).\;
        }
    }
    \For{each \(k\ne i\) and \(r\in[\Block{i}{h-1}.F[k],\Block{i}{h}.F[k])\)}{
        \WaitUntil{} \(\DeltaSet{k}{i}{r}\) appears in the credit pool.\;
    }
    \If{not all transactions in \(\Block{i}{h}.txs\) are valid}{
        Refuse to sign for \(\Block{i}{h}\).\;
    }
    \If{the state root in \(\Block{i}{h}\) does not match the root of the state obtained after importing all selected credit batches and executing all transactions}{
        Refuse to sign for \(\Block{i}{h}\).\;
    }
    Sign for \(\Block{i}{h}\) if all checks pass.\;\label{alg:quark:verifying:end}
}
\AlgTwoESep
\AlgUpon{committing \(\Block{i}{h}\)\label{alg:quark:committing:begin}}{
    \For{each source \(\Shard{k}\ne\Shard{i}\)}{
        \For{each \(r\in[\Block{i}{h-1}.F[k],\Block{i}{h}.F[k])\) in increasing order}{
            \WaitUntil{} \(\DeltaSet{k}{i}{r}\) appears in the credit pool.\;
            Add every credit in \(\DeltaSet{k}{i}{r}\) to the target accounts.\;
            Delete \((\DeltaSet{k}{i}{r},\DeltaProof{k}{i}{r})\) from the credit pool.\;
        }
    }
    Initialize \(\DeltaSet{i}{j}{h}\gets\varnothing\) for every \(\Shard{j}\ne\Shard{i}\).\;
    \For{each \(tx\in\Block{i}{h}.\mathsf{txs}\) in order}{
        Debit its inputs.\;
        \For{each output \(\anglebk{pk,v,\Shard{j}}\)}{
            \eIf{\(j=i\)}{
                \(\bal{i}{pk}\gets\bal{i}{pk}+v\).\;
            }{
                \(\DeltaSet{i}{j}{h}[pk]\gets\DeltaSet{i}{j}{h}[pk]+v\).\;
            }
        }
    }
    \For{each \(\Shard{j}\ne\Shard{i}\)}{
        Invoke \(\csrd{i}{j}.\Send{h,\DeltaSet{i}{j}{h}}\).\;
    }\label{alg:quark:committing:end}
}
\AlgTwoESep
\AlgTwoERole{\textbf{as a replica in shard} \(\Shard{j}\)}
\AlgUpon{\(\csrd{i}{j}.\Deliver{h}\rightarrow(\DeltaSet{i}{j}{h},\DeltaProof{i}{j}{h})\)\label{alg:quark:csrd:begin}}{
    Store \((\DeltaSet{i}{j}{h},\DeltaProof{i}{j}{h})\) into the credit pool.\;\label{alg:quark:csrd:end}
}
\caption{\QuarkShard{} Block Processing.}
\label{alg:quark}
\end{algorithm}

Because each Quark transaction accesses a unique input shard, upon receiving a user transaction, replicas in the input shard can determine its validity solely based on their local state.
If valid, the transaction is added to the local transaction pool; otherwise, the replica rejects it and notifies the user (line~\ref{alg:quark:receiving:begin}--\ref{alg:quark:receiving:end}).
If a transaction already in the pool becomes invalid due to subsequent conflicting transactions, replicas remove it and inform the user.
Note that when evaluating transaction validity, replicas account for both current account balances and pending balance credits residing in the credit pool.

When proposing a new block (line~\ref{alg:quark:proposing:begin}--\ref{alg:quark:proposing:end}), a replica first inspects the parent block's import frontier and the current credit pool.
For each input shard, the replica selects the highest contiguous height at which all corresponding credit batches are present, setting this value as the new block's import frontier.
The replica then selects a sequence of valid Quark transactions from the transaction pool to populate the new block, where balance sufficiency checks account for the newly imported funds in this block.
Finally, the replica computes the block metadata.

Along with the proposed block, the replica broadcasts a validation witness \(W\), which consists of cryptographic proofs for all newly imported credit batches.
The witness \(W\) serves as proof that at least one honest node in the source shard has transmitted the corresponding credit batch, ensuring that any replica can safely await the required credit batches without incurring liveness issues.
The witness \(W\) needs to be broadcast only once during block proposal; it is neither subjected to consensus nor persistently stored.

Upon receiving a block proposal (line~\ref{alg:quark:verifying:begin}--\ref{alg:quark:verifying:end}), a replica first verifies the correctness of its metadata (excluding the state root).
It then verifies that the block's import frontier is non-decreasing relative to its parent, and that all necessary cryptographic proofs in \(W\) are valid. Once these checks pass, the replica can safely wait for all required credit batches to arrive.
Next, the replica sequentially verifies the validity of all block transactions after applying the newly imported funds, and confirms that the resulting state root matches the proposed state root.
Replicas cast their consensus votes for the block if and only if all validation checks succeed.

The aforementioned validation procedure ensures that credit batch importation is monotonic and contiguous, with each height's credit batch being imported exactly once.
Furthermore, it guarantees that all transactions contained within committed blocks are valid Quark transactions.

\subsection{CSRD}

\QuarkShard{} employs the CSRD primitive to guarantee that credit batches derived from an input shard are reliably and verifiably delivered to target output shards.
CSRD can be realized through various implementations; here, we briefly outline one concrete construction, with complete structural details deferred to Appendix~\ref{app:csrd:instance}.
Furthermore, by integrating the RPST primitive proposed by Logos~\cite{liu2026logos}, the vast majority of cross-shard communication overhead can be shifted onto intra-shard networks, as elaborated in Appendix~\ref{app:csrd:logos}.

Specifically, to verifiably deliver a message \(M\) bound to a tag \(\tau\), each replica in the input shard generates a threshold signature on \(\tau\) with a threshold of \(n-f\), and broadcasts this signature to all nodes in the target output shard.
Upon collecting \(n-f\) distinct valid signatures, a node in the target shard aggregates them into a valid cryptographic certificate.
This certificate provides undeniable proof that at least \(n-2f\) honest replicas in the source shard authorized and transmitted the message associated with \(\tau\).

To minimize communication cost when transmitting the payload \(M\), each replica in the input shard applies erasure code (EC) to \(M\) with a reconstruction threshold of \(f+1\).
Each replica then broadcasts its designated EC fragment corresponding to its replica index along with its threshold signature share.
Once a target shard node receives \(n-f\) valid signatures and constructs the certificate, it is guaranteed that at least \(n-2f \geq f+1\) honest replicas have contributed valid EC fragments.
Consequently, these fragments suffice to reconstruct the original message \(M\).
To prevent Byzantine nodes from broadcasting valid signature shares alongside corrupted EC fragments, the scheme secures all fragments using cryptographic commitments.

\section{User Request Compilation}
\label{sec:userlevel}

To process a request spanning multiple input and output shards, the user employs a request compilation algorithm to decompose it into a sequence of Quark transactions, which are subsequently submitted to their corresponding shards according to prescribed rules.
In this section, we first formally define two notions of atomicity for user request processing: \emph{fund atomicity} and \emph{location atomicity}.
We then present a base compilation algorithm that achieves fund atomicity, followed by a refund mechanism that strengthens this guarantee to location atomicity.
Finally, we introduce a prefunding optimization that reduces user request overhead and latency.

\subsection{Atomicity}

We observe that system safety is preserved for an execution as long as all funds remain under the control of their original account holders upon failure.
Therefore, we first define a safe, albeit relaxed, property compared to conventional atomicity~\cite{kokoris2018omniledger,al2017chainspace,dang2019towards}.
Algorithms satisfying this relaxed property guarantee user fund safety while incurring lower overhead along failure execution paths.

\begin{definition}[Fund Atomicity]
    \label{def:fund-atomicity}
    A request compilation algorithm satisfies fund atomicity if it guarantees that,
    either all inputs and outputs of a user request are successfully executed,
    or all involved funds remain under the control of their original account holders.
\end{definition}

Additionally, we define a stronger property that is equivalent to traditional atomicity from the perspective of honest users:

\begin{definition}[Location Atomicity]
    \label{def:loc-atomicity}
    A request compilation algorithm satisfies location atomicity if it guarantees that,
    either all inputs and outputs of a user request are successfully executed,
    or all funds in accounts controlled by honest users remain in their initial states, exactly as if the user request had never been executed.
\end{definition}

\subsection{Base Compilation and Refund}

The base compilation algorithm operates by first consolidating all required funds onto a single coordinator shard, and subsequently committing a single Quark transaction on that coordinator shard to disburse all funds simultaneously to their final destinations, thereby fulfilling the user request.
Algorithm~\ref{alg:basic-compilation} details this procedure.
Note that terms such as "funding transaction" and "final transaction" in the algorithm are introduced solely for ease of exposition; they represent standard, independent Quark transactions with no special properties.

In the base compilation algorithm, one of the input shards is designated as the coordinator shard.
Let \(\mathsf{InShards}(req)\) denote the set of input shards for a given user request \(req\).
For each shard in \(\mathsf{InShards}(req)\) other than the coordinator shard, the algorithm submits a funding Quark transaction.
The input of this transaction consists of all inputs of the user request on that specific shard, while its output targets the coordinator shard, keeping the account and balance amounts unchanged.
If the user request involves only a single input shard, this step is bypassed.

Subsequently, the algorithm waits until all funding transactions are committed and the credit batches containing their outputs are imported by the coordinator shard.
Once all funding transactions are finalized, a final Quark transaction is submitted to the coordinator shard.
Its inputs comprise all request inputs (now located on the coordinator shard), and its outputs correspond to all outputs specified by the user request.
The user request completes successfully once the final transaction is committed and its resulting credit batches are imported by the destination shards.

If any transaction is rejected during this process, the algorithm immediately terminates and returns a failure status.
Because the base compilation algorithm does not alter fund ownership prior to the final transaction, it inherently guarantees fund atomicity.

When an execution fails and an honest user seeks to achieve location atomicity, they can invoke the refund mechanism outlined in Algorithm~\ref{alg:location-fallback} to relocate their funds back to their original shards.
Algorithm~\ref{alg:location-fallback} takes a set of funding transactions to be reverted along with the coordinator shard as inputs, and submits a single refund transaction on the coordinator shard to transfer all migrated funds back to their initial shards.
Because an honest user will not issue conflicting transactions against the refund transaction, this procedure is guaranteed to succeed.

\begin{algorithm}[t]
\footnotesize
\DontPrintSemicolon
\KwIn{Request \(req=\anglebk{I,O}\) and coordinator \(\Shard{c}\in\mathsf{InShards}(req)\).}
\KwOut{\(\mathsf{success}\) only after all requested effects settle.}
\For{each \(\Shard{i}\in\mathsf{InShards}(req)\setminus\{\Shard{c}\}\) \textbf{in parallel}}{
    \AlgConstruct{a funding Quark transaction \(tx_i\) in \(\Shard{i}\)}{
        \(tx_i.I=\{\anglebk{pk,v,\Shard{i}}\mid \anglebk{pk,v,\Shard{i}}\in req.I\}\)\;
        \(tx_i.O=\{\anglebk{pk,v,\Shard{c}}\mid \anglebk{pk,v,\Shard{i}}\in req.I\}\)\;
    }
    Submit \(tx_i\) to \(\Shard{i}\).\;
}
\WaitUntil{} every funding transaction is committed or rejected.\;
\If{a funding transaction was rejected}{
    \Return \(\mathsf{failure}\).\;
}
\WaitUntil{} every corresponding credit batch is imported by \(\Shard{c}\).\;
\AlgConstruct{a final Quark transaction \(tx_f\) in \(\Shard{c}\)}{
    \(tx_f.I=\{\anglebk{pk,v,\Shard{c}}\mid\anglebk{pk,v,\Shard{k}}\in req.I\}\)\;
    \(tx_f.O=req.O\)\;
}
Submit \(tx_f\) to \(\Shard{c}\).\;
\WaitUntil{} the final transaction is committed or rejected.\;
\If{\(tx_f\) was rejected}{
    \Return \(\mathsf{failure}\).\;
}
\WaitUntil{} every corresponding credit batch of \(tx_f\) is imported.\;
\Return \(\mathsf{success}\).\;
\caption{Base Compilation of a User Request.}
\label{alg:basic-compilation}
\end{algorithm}

\begin{algorithm}[t]
\footnotesize
\DontPrintSemicolon
\KwIn{A transaction set \(TX\) to be refunded, coordinator \(\Shard{c}\).}
\Return if \(TX=\varnothing\).\;
\WaitUntil{} every corresponding credit batch of \(tx\in TX\) is imported.\;
\AlgConstruct{a refund Quark transaction \(tx_r\) in \(\Shard{c}\)}{
    \(tx_r.I=\{\anglebk{pk,v,\Shard{c}}\mid \anglebk{pk,v,\Shard{c}}\in tx_i.O\text{ and }tx_i\in TX\}\)\;
    \(tx_r.O=\{\anglebk{pk,v,\Shard{i}}\mid \anglebk{pk,v,\Shard{i}}\in tx_i.I\text{ and }tx_i\in TX\}\)\;
}
Submit \(tx_r\) to \(\Shard{c}\).\;
\WaitUntil{} the refund transaction is committed.\;
\WaitUntil{} every corresponding credit batch of \(tx_r\) is imported.\;
\caption{Refund for Location Atomicity.}
\label{alg:location-fallback}
\end{algorithm}

\subsection{Prefunding}
\label{sec:userlevel:prefunding}

In the base compilation algorithm, users must transfer funds to the coordinator shard prior to executing the final transaction.
However, users are not required to execute the base compilation independently for every individual request.
Consider a scenario where a user anticipates conducting multiple transactions with a specific coordinator shard, while holding the majority of their funds on another shard.
In this case, the user can prefund a sufficient balance to the coordinator shard in advance through a single batch transfer.
During subsequent request executions, the user bypasses the need to submit fund-transferring funding transactions, directly issuing the final transaction instead.
Consequently, the overhead incurred by multiple funding transactions is effectively amortized while simultaneously reducing request execution latency.

For clarity of presentation, transaction fee mechanisms are not explicitly detailed in this paper.
However, in typical blockchain systems, users generally pay execution fees for each transaction~\cite{wood2014ethereum}.
Under such economic models, users are inherently motivated to minimize the number of submitted transactions required to process a given volume of requests, which directly aligns with our design goal of reducing overall system overhead.

\section{Security Analysis}
\label{sec:security}

In this section, we formally prove the security properties of \QuarkShard{}.
We first establish the safety and liveness of \QuarkShard{} when processing Quark transactions.
Next, we demonstrate that Algorithm~\ref{alg:basic-compilation} and Algorithm~\ref{alg:location-fallback} satisfy fund atomicity and location atomicity.
Finally, we analyze the security guarantees provided by \QuarkShard{} regarding MEV resistance.

\subsection{Consistency, Validity, and Liveness}

\begin{theorem}[Consistency]
Any two honest replicas within the same shard commit the same block at the same height.
\end{theorem}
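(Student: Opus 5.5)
The argument reduces the statement to the Agreement property of the intra-shard BFT protocol (Section~\ref{sec:prelim-bft-proof}). The only thing to check is that this property really applies to each shard's instance of the protocol as \QuarkShard{} runs it. So the plan is: confirm that every shard \(\Shard{i}\) runs an unmodified BFT instance whose fault assumption holds, and then read off Agreement.

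\textbf{Step 1: the fault assumption holds.} By the sharding model (Section~\ref{sec:prelim-sharding-model}) and the adversary model (Section~\ref{sec:prelim-net-adversary}), each shard has \(n\ge 3f+1\) replicas and at most \(f<n/3\) of them are Byzantine. This is exactly the hypothesis under which the BFT protocol guarantees Agreement.

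\textbf{Step 2: \QuarkShard{} does not change the consensus core.} Algorithm~\ref{alg:quark} touches only three things: how a leader builds a proposal (import frontier \(F\), witness \(W\), selected transactions), the validity predicate under which an honest replica agrees to sign, and the post-commit execution and \(\csrd{i}{j}.\Send{}\) calls. None of these alters the voting and quorum logic. An honest replica's decision to sign is a local predicate evaluated on the proposal, and such a predicate can only suppress votes; it can never make an honest replica vote for two conflicting blocks at the same height. The witness \(W\) and the credit pool affect whether and when an honest replica votes, but not the quorum-intersection argument that underlies Agreement.

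\textbf{Step 3: the main obstacle is the \textbf{wait until} steps.} Both verification and commitment contain a \textbf{wait until} that blocks on credit batches delivered via CSRD. One might worry that an honest replica executes differently or stalls. However, Consistency concerns only which block is committed at each height, not execution progress. The waits merely delay voting or execution, and delays are already allowed by the asynchronous or partially synchronous network model. The liveness of these waits, which relies on CSRD Correctness and Validity together with the certificates in \(W\), is deferred to the liveness theorem.

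\textbf{Step 4: conclude.} Applying Agreement to shard \(\Shard{i}\), any two honest replicas commit the same block \(\Block{i}{h}\) at every height \(h\).
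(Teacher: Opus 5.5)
Your proposal is correct and takes the same route as the paper, whose proof is just the reduction to the Agreement property of the intra-shard BFT protocol under \(n \geq 3f+1\). Your Steps 2 and 3, arguing that the added validity checks and \textbf{wait until} steps can only withhold or delay honest votes and never produce conflicting commits, spell out what the paper leaves implicit.
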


\begin{proof}
Since the number of Byzantine replicas \(f\) within the shard satisfies \(3f+1 \leq n\), the underlying BFT consensus protocol satisfies Agreement. Consequently, every honest replica in the shard commits the identical block at any given height.
\end{proof}

\begin{theorem}[Validity]
Any Quark transaction contained in a committed block is valid.
Any credit batch imported by a committed destination block is derived from a committed source block, and every credit batch is imported by its destination shard exactly once.
\end{theorem}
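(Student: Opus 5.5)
We prove the three claims separately, each time using the BFT properties of Section~\ref{sec:prelim-bft-proof}, the CSRD properties of Section~\ref{sec:prelim-csrd}, and the validation procedure of Algorithm~\ref{alg:quark}. Every committed block must collect votes from a quorum of \(n-f\) replicas. Any such quorum contains at least \(n-2f\geq f+1\) honest replicas. So every committed block has been checked and signed by at least one honest replica that ran lines~\ref{alg:quark:verifying:begin}--\ref{alg:quark:verifying:end}.

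\emph{Transaction validity.} An honest replica signs \(\Block{i}{h}\) only if every transaction in \(\Block{i}{h}.txs\) is valid. Validity is judged on the state obtained by executing the committed prefix \(\Block{i}{0},\ldots,\Block{i}{h-1}\), then importing the credit batches selected by \(\Block{i}{h}.F\), then applying the earlier transactions of the block. By the Consistency theorem, all honest replicas share this prefix. Since execution is deterministic, the state they check against equals the state that honest replicas actually reach when they commit \(\Block{i}{h}\). Hence every committed Quark transaction is valid (single input shard, sufficient funds, conservation of value, fresh nonce, valid signatures) in the state where it is applied. One could instead invoke the BFT Validity property with this check as the shard's block-validity predicate. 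The subtlety is that the predicate depends on credit batch contents, which are not in the block. So I would argue that, conditional on the certificates in \(W\), the contents are uniquely determined; this is shown in the next step.

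\emph{Imported batches come from committed source blocks.} Suppose \(\Block{j}{h}\) imports \(\DeltaSet{i}{j}{r}\), i.e.\ \(r\in[\Block{j}{h-1}.F[i],\Block{j}{h}.F[i])\). Then an honest signer checked that \(\csrd{i}{j}.\Verify{r,\DeltaProof{i}{j}{r}}\) accepts. By CSRD Validity, some honest replica of \(\Shard{i}\) invoked \(\csrd{i}{j}.\Send{r,\cdot}\). Honest replicas invoke \(\send\) with tag \(r\) only after committing \(\Block{i}{r}\) and executing it. By Consistency in \(\Shard{i}\), every honest sender commits the same \(\Block{i}{r}\). They also hold the same pre-state, by induction on height, which in turn uses this theorem for the batches that \(\Shard{i}\) itself imported. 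So all honest senders send the identical batch \(\DeltaSet{i}{j}{r}\).

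It remains to show that the message a destination replica obtains through \(\deliver\) is exactly this batch. I expect this to be the main obstacle, because the stated CSRD Validity only guarantees that someone sent, not what was sent. To close the gap I would rely on CSRD Correctness, since all honest sources send the same \(M\), together with the binding commitment in the concrete construction of Appendix~\ref{app:csrd:instance}: reconstruction from \(f+1\) committed fragments yields a unique message per tag. I would state this uniqueness explicitly as a lemma about the CSRD instance. The induction over heights, interleaved across shards, is well founded because batch \(r\) of \(\Shard{i}\) depends only on blocks of \(\Shard{i}\) up to height \(r\).

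\emph{Exactly-once import.} Along the committed chain of \(\Shard{j}\), honest signers enforce \(\Block{j}{h}.F[i]\geq\Block{j}{h-1}.F[i]\). The commit procedure imports exactly the heights in \([\Block{j}{h-1}.F[i],\Block{j}{h}.F[i])\). These half-open intervals are pairwise disjoint and consecutive, and they start at \(\Block{j}{-1}.F[i]=0\). So each height \(r\) is imported at most once, namely in the unique block whose interval contains \(r\), and without gaps. Within a block, duplicates cannot occur because the loop iterates over distinct \(r\) and deletes each batch from the pool after applying it.

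For "at least once," I would appeal to the Liveness property of the BFT protocol: honest proposers continually extend \(F[i]\) to the contiguous prefix present in the pool. Combined with CSRD Correctness, which ensures every batch eventually arrives, each batch is eventually imported. This part might instead be deferred to the Liveness theorem; I would state the Validity claim as "at most once, and in contiguous order."
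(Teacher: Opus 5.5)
Your proposal is correct and takes essentially the same route as the paper. Every committed block passes the verification rules; the paper gets this from BFT Validity directly, where you use a quorum-of-honest-signers argument. That gives transaction validity. CSRD Validity, plus the fact that honest replicas send a batch only after committing its block, gives provenance, and the monotone, contiguous import frontier gives exactly-once import. Your extra steps tighten points the paper's proof passes over silently: using CSRD Correctness, consistency and commitment binding to show the delivered batch contents equal the committed source batch, and assigning the ``at least once'' half to liveness.
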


\begin{proof}
Since the number of Byzantine replicas \(f\) within the shard satisfies \(3f+1 \leq n\), the underlying BFT protocol satisfies Validity.
That is, every committed block must pass the verification rules specified in Algorithm~\ref{alg:quark} lines~\ref{alg:quark:verifying:begin}--\ref{alg:quark:verifying:end}.

During block verification, the validity of all transactions contained within the block is enforced.
Therefore, all Quark transactions included in a committed block are valid.

Similarly, verification ensures that the proof for each imported credit batch exists and is valid.
By the Validity property of CSRD, at least one honest replica in the source shard has transmitted the credit batch.
According to Algorithm~\ref{alg:quark}, an honest replica transmits the credit batch generated by a block if and only if that block has been committed.
Thus, any credit batch imported by a committed block is guaranteed to originate from a committed block.

Furthermore, according to the verification rules, the importation of credit batches is strictly monotonic and contiguous.
Consequently, each credit batch is imported by its destination shard exactly once.
\end{proof}

\begin{theorem}[Liveness]
Any Quark transaction added to an honest replica's transaction pool that remains persistently valid will eventually be included in a committed block.
Every generated credit batch will eventually be imported by its destination shard.
\end{theorem}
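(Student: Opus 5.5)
The plan is to prove the two claims of the Liveness theorem together, because they depend on each other. Committing a block that imports credit batches requires honest replicas to finish the \WaitUntil{} steps in verification and commitment. Conversely, a credit batch is generated only when its source block commits.

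We first fix the basic progress step, which is that honest replicas never block forever inside Algorithm~\ref{alg:quark}. The only blocking points are the \WaitUntil{} statements, which wait for some \(\DeltaSet{k}{i}{r}\) to appear in the credit pool. At the verification step this wait is reached only after \(\csrd{k}{i}.\Verify{r,\DeltaProof{k}{i}{r}}\) accepts. By CSRD Validity, some honest replica of \(\Shard{k}\) has therefore invoked \(\send\) with tag \(r\). An honest replica does this only after committing \(\Block{k}{r}\), and by the Consistency theorem all honest replicas of \(\Shard{k}\) commit the same \(\Block{k}{r}\). Execution is deterministic, so every honest replica of \(\Shard{k}\) that commits \(\Block{k}{r}\) computes the same \(\DeltaSet{k}{i}{r}\). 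We then need that all honest replicas of \(\Shard{k}\) eventually reach and commit height \(r\). Once they do, CSRD Correctness delivers \(\DeltaSet{k}{i}{r}\) to every honest replica of \(\Shard{i}\), and the wait terminates.

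The main obstacle is the circularity. Progress of \(\Shard{k}\) to height \(r\) may itself require waiting on batches from other shards, including \(\Shard{i}\). To break it, I would induct on a global well-founded order, concretely on the real-time order of certificate creation. A verifiable certificate for \((k,r)\) exists only after some honest replica committed \(\Block{k}{r}\), so \(\Block{k}{r}\) was committed at least by one honest replica. That replica finished its own waits, meaning all batches it imported had already been delivered to it, and therefore certified, strictly earlier. The remaining honest replicas of \(\Shard{k}\) can fetch the committed chain, or catch up, and must wait only on batches whose certificates already exist. The induction hypothesis says those batches eventually arrive. This gives the claim that every committed block is eventually committed by all honest replicas of its shard, and hence that every generated credit batch is eventually delivered to every honest replica in its destination shard.

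The rest of the argument is straightforward.

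\begin{itemize}
\item \textbf{Importation of every batch.} A delivered batch sits in the credit pool with its certificate. BFT Liveness guarantees that infinitely many blocks are committed with a fixed fraction proposed by honest replicas. An honest proposer advances \(F[k]\) to the largest contiguous height present in its pool. Once all batches up to height \(r\) have been delivered to that proposer, the next honest block sets \(F[k]>r\). The contiguity and monotonicity checks then imply that \(\DeltaSet{k}{i}{r}\) is imported, and by the Validity theorem it is imported exactly once.
\item \textbf{Inclusion of persistently valid transactions.} Suppose \(tx\) is in an honest replica's pool and remains valid. Since honest-proposed blocks keep occurring, I would argue that \(tx\) is eventually proposed and committed. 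I expect this to need the standard BFT assumption that transactions are disseminated to honest proposers, or that an honest proposer eventually selects \(tx\), for example by FIFO selection under bounded block capacity.
\item \textbf{Voting on an honest proposal.} Honest verifiers accept \(tx\) because it is valid. By the progress step, their waits terminate, so they vote and the block commits.
\end{itemize}
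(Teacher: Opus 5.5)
Your proposal is correct and follows the paper's proof. Waits on credit batches terminate because an accepted certificate implies, by CSRD Validity, a committed source block whose batch honest source replicas then send, so CSRD Correctness delivers it. BFT Liveness together with the honest proposal rule then gives both transaction inclusion and contiguous importation.

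Your induction on the order of certificate creation is a worthwhile strengthening rather than a different route. The paper simply asserts that all honest source replicas invoke \(\send\) upon commitment, without addressing that their own commit-time waits depend on other shards' progress. The paper's inclusion step likewise relies implicitly on the dissemination or proposer-selection assumption you flag.
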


\begin{proof}
During block verification and commitment, \QuarkShard{} waits for all required credit batches to arrive; we first prove that this waiting mechanism does not compromise the Liveness property of the underlying BFT protocol.
During verification, an honest replica waits for a credit batch only after validating that its cryptographic proof exists and is valid.
By the Validity of CSRD, at least one honest replica in the source shard has transmitted this batch.
Hence, the block deriving this credit batch has already been committed.
According to Algorithm~\ref{alg:quark}, all honest replicas invoke CSRD to transmit the credit batch upon block commitment.
By the Correctness of CSRD, all honest replicas in the destination shard will eventually receive the batch.
Thus, waiting for credit batches during verification does not undermine liveness.
Analogously, because a committed block must be a verifiable block, waiting for credit batches during commitment also preserves liveness.

Since the number of Byzantine replicas \(f\) within the shard satisfies \(3f+1 \leq n\), the running BFT consensus protocol satisfies Liveness.
Consequently, blocks proposed by honest replicas are continuously committed.
According to the proposal rules for honest replicas (Algorithm~\ref{alg:quark} lines~\ref{alg:quark:proposing:begin}--\ref{alg:quark:proposing:end}), honest replicas propose valid Quark transactions sequentially and import as many pending credit batches as possible.
Therefore, any Quark transaction added to an honest replica's transaction pool that remains continuously valid will eventually be included in a committed block.

For any credit batch derived from a committed block, all honest replicas in the destination shard will eventually receive it pursuant to honest transmission rules and the Correctness of CSRD.
Therefore, all generated credit batches will eventually be imported by their respective destination shards.
\end{proof}

\subsection{Atomicity}

\begin{theorem}[Fund Atomicity]
Algorithm~\ref{alg:basic-compilation} satisfies fund atomicity.
\end{theorem}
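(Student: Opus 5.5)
The plan is a case analysis on how Algorithm~\ref{alg:basic-compilation} terminates. The invariant carried through every case is that no Quark transaction committed before the final transaction $tx_f$ changes the controlling account of any fund. The argument relies on the Validity and Liveness theorems above, and on the fact that Quark transactions are only committed when their signatures are valid for the input accounts.

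\textbf{Step 1: funding transactions preserve ownership.} By construction in Algorithm~\ref{alg:basic-compilation}, each funding transaction $tx_i$ has outputs $\anglebk{pk,v,\Shard{c}}$ in exact correspondence with its inputs $\anglebk{pk,v,\Shard{i}}$. Only the shard changes; the account $pk$ and the amount $v$ stay the same.
\begin{itemize}
\item By the Validity theorem, a committed $tx_i$ debits $\bal{i}{pk}$ by $v$.
\item The resulting credit batch is imported exactly once into $\Shard{c}$, crediting $\bal{c}{pk}$ by $v$.
\item While the batch is in transit, the credit cannot be spent by anyone. It is not yet importable except to $pk$'s balance in $\Shard{c}$, and by Liveness it eventually is imported.
\end{itemize}
Hence after any subset of funding transactions commits, every involved fund is still controlled by its original holder $pk$. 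The only change is possibly its location. Spending that fund requires a signature by $pk$, since validity of Quark transactions requires signatures of the input accounts.

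\textbf{Step 2: the termination cases.}
\begin{enumerate}
\item \emph{Some funding transaction is rejected.} The algorithm returns failure before constructing $tx_f$. By Step 1, all funds remain controlled by their original holders.
\item \emph{All funding transactions commit but $tx_f$ is rejected.} A rejected transaction is never included in a committed block, so it has no effect on state. Again by Step 1 ownership is unchanged.
\item \emph{$tx_f$ commits.} $tx_f$ is a single Quark transaction, so its inputs are debited atomically within one committed block of $\Shard{c}$. Its intra-shard outputs are credited in the same execution. Its cross-shard outputs are aggregated into the credit batches $\DeltaSet{c}{j}{h}$. By the Liveness theorem these batches are eventually imported by every destination shard, and by the Validity theorem each is imported exactly once. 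Hence all inputs and outputs of $req$ take effect.
\end{enumerate}
Note that $tx_f$'s inputs are exactly $req.I$ relocated to $\Shard{c}$, and its outputs are exactly $req.O$. So a committed $tx_f$ realizes the request in full.

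\textbf{Main obstacle.} The delicate point is arguing that there is no intermediate state in which funds are "controlled" by nobody or by someone else. This concerns, first, credits in flight, and second, interleavings where a Byzantine or adversarial party tries to spend relocated funds. I would handle in-flight credits by arguing that a credit batch is deterministically bound to $pk$ in $\Shard{c}$ and guaranteed to be imported, via CSRD Correctness and the Liveness theorem. For the interleaving concern, I would appeal to signature unforgeability: any spend of $pk$'s funds, whether in $\Shard{i}$ or in $\Shard{c}$, requires $pk$'s authorization. So any such spend is by definition an action of the original holder, and is consistent with fund atomicity.

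Finally, I would make explicit that $tx_f$ is not submitted until all funding transactions have committed and their batches are imported. This ordering rules out a partial execution in which some inputs are consumed by $tx_f$ while others are not.
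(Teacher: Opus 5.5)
Your proposal is correct and follows essentially the same argument as the paper. If \(tx_f\) commits, the funding transactions have already relocated all inputs to the coordinator shard and \(tx_f\) disburses exactly \(req.O\); otherwise \(tx_f\) never executes, and the committed funding transactions change only the shard, not the owning account. Beyond the paper, you spell out details it leaves implicit: in-flight credit batches are bound to \(pk\) and, via the Validity and Liveness theorems, imported exactly once; any interleaved spend needs the owner's signature; and you key the success case on \(tx_f\) committing rather than on the algorithm returning \(\mathsf{success}\), which is a slightly cleaner split.
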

\begin{proof}
First, consider the case where Algorithm~\ref{alg:basic-compilation} completes successfully.
In this scenario, all funding transactions and the final transaction are committed, and their corresponding credit batches are imported.
The funding transactions migrate all inputs of the user request to the coordinator shard, whereupon the final transaction transfers these funds to all destination outputs specified by the user request.
Thus, along the execution success path, all inputs and outputs of the user request are fully executed.

Conversely, when the algorithm fails, the final transaction is necessarily not executed.
In this case, only a subset of funding transactions may have been executed.
Because funding transactions do not alter the ownership of the funds, all involved funds remain under the control of their original accounts.
\end{proof}

\begin{theorem}[Location Atomicity]
Upon failure of Algorithm~\ref{alg:basic-compilation}, an honest user can invoke Algorithm~\ref{alg:location-fallback} to guarantee location atomicity.
\end{theorem}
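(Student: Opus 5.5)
Upon failure of Algorithm~\ref{alg:basic-compilation}, the plan is to show that running Algorithm~\ref{alg:location-fallback} on the set \(TX\) of \emph{committed} funding transactions returns every honest account to its pre-request state. Failure occurs at one of two places, and I will treat them separately. In the first case, some funding transaction is rejected, and \(TX\) is the subset of committed funding transactions. In the second case, all funding transactions commit but \(tx_f\) is rejected, and \(TX\) is all of them. In both cases the final transaction is never committed. This uses the uniqueness of \(nonce\) and the fact that Algorithm~\ref{alg:basic-compilation} submits \(tx_f\) at most once. Hence the only state changes attributable to the request are the effects of the transactions in \(TX\). If \(TX=\varnothing\), nothing changed and the claim is immediate.

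For \(TX\neq\varnothing\), I first argue that the refund terminates. Each \(tx_i\in TX\) is committed, so by the Liveness theorem its credit batch is eventually imported into \(\Shard{c}\), and the first wait in Algorithm~\ref{alg:location-fallback} ends. By the Validity theorem each batch is imported exactly once. Therefore \(\bal{c}{pk}\) has increased by exactly \(\sum v\) over the outputs \(\anglebk{pk,v,\Shard{c}}\) of transactions in \(TX\). This is precisely \(tx_r.I\).

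The main obstacle is showing that \(tx_r\) is valid and stays valid until it is included, so that the Liveness theorem applies. The required conditions are as follows. The inputs of \(tx_r\) all lie in \(\Shard{c}\) by construction. Input and output sums agree, because each funding transaction maps \(\anglebk{pk,v,\Shard{i}}\) to \(\anglebk{pk,v,\Shard{c}}\), so \(tx_r\) is the exact inverse multiset. The honest user supplies a fresh nonce and valid signatures. Balance sufficiency is the delicate condition. The credited amounts reached \(\Shard{c}\) and are controlled only by the honest user's keys. By the unforgeability of signatures, no other party can debit them. An honest user issues no conflicting transaction spending these funds. Hence \(\bal{c}{pk}\) never drops below the required amount, so \(tx_r\) is persistently valid. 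By the Liveness theorem, \(tx_r\) is committed and its credit batches are imported into each original \(\Shard{i}\).

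Finally, I will compose the net effect. Each \(tx_i\) debited \(v\) from \(\bal{i}{pk}\) and credited \(v\) to \(\bal{c}{pk}\). The refund \(tx_r\) debits \(v\) from \(\bal{c}{pk}\) and, after import, credits \(v\) back to \(\bal{i}{pk}\). The per-account, per-shard net change is therefore zero. This gives the condition of Definition~\ref{def:loc-atomicity}: every honest account's balances equal their initial values, as if the request had never been executed.
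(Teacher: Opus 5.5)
Your proposal is correct and takes the same approach as the paper. The refund transaction moves exactly the funds migrated by the committed funding transactions back to their original shards, and it is guaranteed to succeed because the honest user never issues a transaction that conflicts with it. Your case split on where Algorithm~\ref{alg:basic-compilation} fails, your use of the Liveness and Validity theorems for credit-batch import and for inclusion of \(tx_r\), and your per-account net-change computation spell out steps the paper's short proof leaves implicit.
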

\begin{proof}
According to the execution logic of Algorithm~\ref{alg:location-fallback}, upon successful execution, all funds associated with the reverted funding transactions are transferred back to their original shards.
Furthermore, because honest users do not issue conflicting transactions against the refund transaction, Algorithm~\ref{alg:location-fallback} is guaranteed to execute successfully.
Consequently, location atomicity is preserved.
\end{proof}

\subsection{Anti-MEV}

\begin{definition}[Cross-Shard Front-Running Free]
\label{def:CSFFF}
A sharding blockchain system is cross-shard front-running free if executing a front-running attack against a cross-shard request is no easier for an adversary than executing a front-running attack against an intra-shard request.
\end{definition}

\begin{theorem}[Cross-Shard Front-Running Free]
~\\
\QuarkShard{} is cross-shard front-running free.
\end{theorem}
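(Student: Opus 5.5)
The plan is a reduction argument: any front-running attack on a cross-shard request under \QuarkShard{} can be recast as a front-running attack on one or more intra-shard (single-input-shard) transactions. Definition~\ref{def:CSFFF} compares the adversary's power in the two settings, so it suffices to show the cross-shard case gives the adversary no extra observation window and no extra information. I will model a front-running attack as the adversary (i) observing a pending transaction's intent in some shard $\Shard{j}$ and (ii) getting its own transaction in $\Shard{j}$ ordered before the victim transaction in a committed block of $\Shard{j}$.

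\textbf{Step 1: only Quark transactions are ever visible.} By Algorithm~\ref{alg:quark}, replicas receive, store and propose only Quark transactions. The user request $req$ itself is never submitted or broadcast. Under Algorithm~\ref{alg:basic-compilation}, the user first submits funding transactions, each of which only moves $\anglebk{pk,v,\Shard{i}}$ to $\anglebk{pk,v,\Shard{c}}$ with the same account and amount. Only after these are committed and their credit batches imported, which the Validity and Liveness theorems guarantee happens exactly once, does the user construct and submit $tx_f$.

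\textbf{Step 2: the funding phase reveals nothing exploitable.} A funding transaction is a self-transfer, so it encodes none of $req.O$. Any adversary view during the funding phase can therefore be simulated from the funding transactions alone, independent of the outputs. This is no more information than an observer gets when a user moves funds within one shard. The user may also prefund (Section~\ref{sec:userlevel:prefunding}), which makes funding indistinguishable from unrelated transfers.

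\textbf{Step 3: the final transaction is front-run exactly like an intra-shard transaction.} $tx_f$ is a single Quark transaction with input shard $\Shard{c}$. Its intent becomes observable only when it enters $\Shard{c}$'s pool, and it is ordered by $\Shard{c}$'s consensus alone. Its cross-shard outputs are then imported through the frontier $F$ into some later block of a destination shard $\Shard{j}$, and credits are applied before any transactions in that block. So the only ordering decision that exposes $tx_f$'s intent is the one in $\Shard{c}$, with the same window and mechanism as any intra-shard transaction of $\Shard{c}$. The residual case is an adversary in $\Shard{j}$ who sees $tx_f$ committed in $\Shard{c}$ and acts before the import. I will argue this is equivalent to observing an already-committed intra-shard transaction and reacting to its effects. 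That is not front-running of a pending intent, and it is also available against intra-shard transactions. This contrasts with traditional protocols, where the whole request is public while output-shard consensus is still pending.

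\textbf{Main obstacle.} The hardest step is Step 3's residual case. Definition~\ref{def:CSFFF} is informal, so I must fix precisely what counts as ``front-running,'' namely acting on unexecuted intent. The formalization must treat the import delay of committed credits as post-commit information rather than an extra pre-execution window. I expect to phrase this as a simulation: every cross-shard attack schedule maps to an intra-shard attack schedule in $\Shard{c}$ with the same success condition.
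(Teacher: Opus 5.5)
Your Steps 1--3 match the paper's proof. The request is never broadcast, funding transactions carry nothing of $req.O$, and $tx_f$ is exposed only when it is submitted to $\Shard{c}$, where it is ordered exactly like an intra-shard transaction. The paper stops there and never discusses the window between $tx_f$'s commit in $\Shard{c}$ and its import in $\Shard{j}$. You were right to flag that window, but your planned resolution fails. The window is \emph{not} ``also available against intra-shard transactions.'' An intra-shard transaction's outputs are credited in the same block that commits it, so against it the adversary never gets to order a transaction between a final, public intent and its effect. That gap between input-shard and output-shard execution is exactly what the definition is meant to rule out. Calling it ``post-commit information'' only relabels it. Schedules in which the adversary's $\Shard{j}$ transaction is ordered before the import have no counterpart among intra-shard schedules in $\Shard{c}$, so the simulation as you stated it does not go through.

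The missing ingredient is that credits are order-independent, and you already cite the facts needed for it. By Algorithm~\ref{alg:quark}, an imported credit batch is applied unconditionally at the start of the importing block. It can never be rejected, and it only adds to balances. Any adversarial transaction in $\Shard{j}$ debits only accounts the adversary can sign for. So neither its validity nor its effect depends on whether the import precedes it, and the credit's effect does not depend on the adversary's transaction either. A Byzantine proposer can hold back the frontier, but that only postpones the credit; it never alters it. Hence every schedule with the adversary's transaction ahead of the import can be permuted into one with the import first and an identical resulting state. That is a schedule in which $tx_f$'s effects coincide with its commit, as for an intra-shard transaction. With this commutativity lemma your simulation closes. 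The lemma genuinely uses that Quark transactions are pure transfers, and it would fail if outputs could trigger order-sensitive execution in $\Shard{j}$. The recipient's later use of the credited funds is a fresh Quark transaction, exposed only when submitted.
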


\begin{proof}
In \QuarkShard{}, a cross-shard user request is compiled and decomposed into a sequence of individual, independent Quark transactions.
An adversary observing an issued Quark transaction cannot infer the structure of the underlying complete user request or predict subsequent Quark transactions that the user will execute.
Consequently, the adversary can initiate a front-running attack only after the specific target Quark transaction is publicly broadcast, which presents the same level of difficulty as front-running a purely intra-shard request.
\end{proof}

\begin{definition}[Order-Fairness BFT Compatibility]
\label{def:OFBC}
~\\
A sharding blockchain system is order-fairness BFT compatible if it can seamlessly integrate non-sharding order-fairness BFT consensus protocols~\cite{kelkar2020order,kelkar2023themis} within each shard to enforce order fairness for all processed transactions.
\end{definition}

\begin{theorem}[Order-Fairness BFT Compatibility]
\QuarkShard{} is Order-Fairness BFT Compatible.
\end{theorem}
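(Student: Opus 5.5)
The plan is to argue by substitution: replace the intra-shard BFT protocol of each shard with an off-the-shelf order-fairness BFT protocol~\cite{kelkar2020order,kelkar2023themis}, then check two things. First, \QuarkShard{}'s guarantees survive the substitution. Second, order fairness then holds for every transaction the system processes.

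The first step is to isolate exactly what \QuarkShard{} requires from its intra-shard consensus. From the proofs of Consistency, Validity and Liveness, the protocol only ever invokes the Agreement, Validity and Liveness properties of Section~\ref{sec:prelim-bft-proof}, applied to a single shard in isolation. The block-validity predicate (Algorithm~\ref{alg:quark}, lines~\ref{alg:quark:verifying:begin}--\ref{alg:quark:verifying:end}) is deterministic and depends only on:
\begin{itemize}
\item the shard's own state,
\item the locally held credit pool, and
\item CSRD certificates, which are publicly verifiable.
\end{itemize}
Order-fairness protocols such as Themis provide Agreement and Liveness and accept an external validity predicate. So the earlier theorems carry over verbatim with the fair-ordering protocol as the BFT instance.

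The second step is to argue that fairness covers every processed transaction. By Definition~\ref{def:quark-transaction}, each processed object is a Quark transaction whose validity and ordering are decided only by its unique input shard; no other shard votes on or reorders it. Cross-shard credit batches are not transactions ordered by the destination shard. They are imported deterministically by contiguous height via the import frontier, so there is no ordering choice left for an adversary there. Hence the set of transactions each shard orders is exactly the set submitted to it, and the non-sharding fairness guarantee applies directly to each shard's stream. Cross-shard requests never need cross-shard ordering coordination, unlike 2PC or Kronos, where one request is ordered in several shards and fairness would need to be consistent across them.

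The main obstacle is the interaction between the fair ordering and the import frontier and validity waits. Two issues arise:
\begin{itemize}
\item Fair-ordering protocols fix the order from replicas' receipt orders, which could clash with the proposer's freedom in Algorithm~\ref{alg:quark}.
\item Transactions may become valid only after credits are imported.
\end{itemize}
I would handle this by separating ordering from execution. The fair protocol orders the transaction sequence. The block's frontier $F$ is chosen as metadata independent of transaction order. Any transaction invalid at its fair position is skipped deterministically at execution time rather than rejected at vote time. Finally, I would check that this adjusted validity rule keeps the Validity and Liveness arguments intact.
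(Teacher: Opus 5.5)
Your proposal is correct, and its second step is the paper's entire proof: each shard orders only the Quark transactions for which it is the input shard, so no inter-shard coordination is needed and a non-sharding order-fairness protocol can run per shard. The paper stops at asserting that such protocols ``can be directly deployed.'' You go further in two ways. First, you check that the Consistency, Validity and Liveness theorems use only the abstract Agreement/Validity/Liveness properties of the per-shard BFT, so they survive the substitution. Second, you expose a conflict the paper's argument passes over. In Algorithm~\ref{alg:quark} the proposer fills a block from its own pool with transactions it deems valid, and honest replicas refuse any block containing an invalid transaction or a mismatched state root. A receipt-order-fair protocol instead fixes the order externally, so two individually valid but conflicting spends from one account could yield a fair order that no honest replica will sign. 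Your order-then-execute fix with deterministic skipping resolves this. The price is that the Validity theorem weakens to ``every \emph{executed} transaction is valid,'' and the state root and outgoing credit batches can be computed only once the order is final. The paper's route buys brevity and isolates the structural reason; yours turns ``directly deployed'' into an actual argument.

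One overstatement to fix when you re-verify Liveness: the import frontier is not free of proposer choice. Algorithm~\ref{alg:quark} only checks that $F$ is monotone and that the newly imported heights are certified, so a proposer, whether Byzantine or merely not yet holding a batch, may lag it. This never reorders Quark transactions, so fairness is unaffected. Under your skip rule, however, a transaction admitted to a pool on the strength of pending credits could reach its fair position before those credits are imported and be skipped. So the Liveness theorem does not carry over verbatim. Restricting admission to already-imported balances closes this gap, and that is all Algorithm~\ref{alg:basic-compilation} needs, since it submits $tx_f$ only after the funding credits are imported.
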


\begin{proof}
In \QuarkShard{}, each shard processes only those Quark transactions for which it serves as the input shard, operating without the need for inter-shard coordination to process a given individual transaction.
As a result, Order-Fairness BFT protocols designed for non-sharding blockchains can be directly deployed within each shard to enforce order fairness across all Quark transactions.
\end{proof}

\section{Performance Analysis}
\label{sec:performance}

In this section, we analyze the performance of \QuarkShard{}.
Regarding intra-shard overhead, we first analyze the execution cost of the base compilation algorithm on the success path, followed by an analysis of the overhead resulting from invalid requests and credit batch importations.
Next, we demonstrate that \QuarkShard{} achieves exceptionally low cross-shard communication overhead.
Finally, we evaluate the processing latency of user requests under \QuarkShard{}.

\subsection{Intra-Shard Overhead}

\begin{theorem}[IS-Overhead of Base Compilation on the Success Path]
For a user request spanning \(a\) input shards, a successful execution of Algorithm~\ref{alg:basic-compilation} without prefunding incurs an IS-overhead of \(a\mathcal{B}\).
\end{theorem}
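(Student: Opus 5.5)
The proof is a counting argument over the Quark transactions that Algorithm~\ref{alg:basic-compilation} submits on the success path. We also need to argue that nothing else in the pipeline incurs a BFT round charged to the request. The IS-overhead is measured in units of \(\mathcal{B}\), the cost of processing one transaction through intra-shard BFT. So the goal is to show that exactly \(a\) Quark transactions reach consensus, each in exactly one shard, and that credit importation is not a consensus cost of this request.

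\textbf{Step 1: count the transactions.} Let \(\mathsf{InShards}(req)\) have size \(a\) with coordinator \(\Shard{c}\). The algorithm constructs one funding transaction \(tx_i\) for each \(\Shard{i}\in\mathsf{InShards}(req)\setminus\{\Shard{c}\}\), giving \(a-1\) transactions (none when \(a=1\)). It then constructs the single final transaction \(tx_f\) in \(\Shard{c}\). On the success path nothing is rejected and no refund (Algorithm~\ref{alg:location-fallback}) is invoked. Hence exactly \(a\) Quark transactions are submitted and committed.

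\textbf{Step 2: each transaction costs one \(\mathcal{B}\).} By Definition~\ref{def:quark-transaction}, each of these transactions has all its inputs in a single shard: \(\Shard{i}\) for \(tx_i\), and \(\Shard{c}\) for \(tx_f\) once the funding credits are imported. By Algorithm~\ref{alg:quark}, a transaction is included in a block and agreed on only by replicas of its input shard. By the Validity theorem and the Liveness theorem, it is committed there once. So each contributes exactly one BFT execution cost \(\mathcal{B}\), for a total of \(a\mathcal{B}\).

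\textbf{Step 3: output-side work adds no consensus round.} This is the step requiring care. The cross-shard outputs of \(tx_i\) and \(tx_f\) are aggregated into credit batches \(\DeltaSet{i}{j}{h}\) and imported by advancing the frontier \(F\) in some later destination block. I will argue from Algorithm~\ref{alg:quark} that importation never places the transaction itself, or any new transaction, into a destination block. It only advances the frontier vector and applies credits during block execution. This frontier update is per-block and per-source-shard, not per-transaction, and it is performed whether or not the request exists, since even empty batches are transmitted. Its marginal cost attributable to the request is therefore not a BFT execution of a transaction, and it falls under the separately analyzed import cost rather than \(\mathcal{B}\).

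Combining Steps 1--3 yields an IS-overhead of \(a\mathcal{B}\). I expect Step 3 to be the main obstacle, because the claim depends on the accounting convention that import is not charged as \(\mathcal{B}\). I would make that convention explicit by pointing to the frontier-based import mechanism, which has no per-transaction agreement.
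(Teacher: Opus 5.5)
Your proposal is correct and follows the paper's proof. Both count the \(a-1\) funding transactions and the single final transaction, charge each one \(\mathcal{B}\) in its input shard, and conclude \((a-1)\mathcal{B}+\mathcal{B}=a\mathcal{B}\); your Step 3 just makes explicit what the paper leaves implicit by treating credit-batch importation as a separate cost (its ``Overhead of Importing Credit Batches'' theorem) rather than as \(\mathcal{B}\).
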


\begin{proof}
On the success path, Algorithm~\ref{alg:basic-compilation} commits one funding transaction in each of the \(a-1\) input shards other than the coordinator shard, followed by one final transaction in the coordinator shard.
Each of these Quark transactions is processed by BFT consensus within its input shard, incurring an IS-overhead of \(\mathcal{B}\) per transaction under our cost model.
Thus, the total IS-overhead is \((a-1)\mathcal{B}+\mathcal{B}=a\mathcal{B}\).
\end{proof}

Through the prefunding optimization introduced in Section~\ref{sec:userlevel:prefunding}, assuming that a single prefunding operation can be amortized across \(l\) subsequent transactions on average, the overhead of funding transactions for user requests is effectively reduced.
On the success path, the amortized IS-overhead per user request becomes \([(a-1)/l+1]\mathcal{B}\).

\begin{theorem}[IS-Overhead of Base Compilation on the Failure Path]
For an invalid user request, let \(x<a\) denote the number of funding transactions that are committed on the failure path.
When Algorithm~\ref{alg:basic-compilation} fails, the resulting IS-overhead is \(x\mathcal{B}\).
If the user requires location atomicity, executing Algorithm~\ref{alg:location-fallback} incurs an additional overhead, yielding a total IS-overhead of \((x+\mathbf{1}[x>0])\mathcal{B}\), where \(\mathbf{1}[x>0]\) is the indicator function that equals 1 if \(x>0\) and 0 otherwise.
\end{theorem}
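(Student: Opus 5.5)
The argument is a direct count of the Quark transactions that Algorithm~\ref{alg:basic-compilation} (and, if invoked, Algorithm~\ref{alg:location-fallback}) actually commits on the failure path. As in the success-path theorem, each committed Quark transaction costs exactly $\mathcal{B}$, since it undergoes BFT consensus only in its unique input shard. Rejected transactions cost nothing at the consensus level, because replicas reject invalid transactions at the pool (lines~\ref{alg:quark:receiving:begin}--\ref{alg:quark:receiving:end}). The importing of credit batches by the coordinator shard is not counted, in line with the cost model used earlier.

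\textbf{Base algorithm.} I will split the failure into two cases.
\begin{itemize}
\item \emph{Some funding transaction is rejected.} The algorithm returns $\mathsf{failure}$ before building $tx_f$, so only the committed funding transactions, $x$ of them, reach consensus.
\item \emph{All funding transactions commit, but $tx_f$ is rejected.} Then $x$ equals the number of funding transactions, namely $|\mathsf{InShards}(req)|-1=a-1<a$, and $tx_f$ adds no consensus cost.
\end{itemize}
In both cases the IS-overhead is $x\mathcal{B}$. The hypothesis $x<a$ holds automatically because at most $a-1$ funding transactions exist.

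\textbf{Refund for location atomicity.} Algorithm~\ref{alg:location-fallback} is invoked with $TX$ equal to the set of committed funding transactions, so $|TX|=x$.
\begin{itemize}
\item If $x=0$, the algorithm returns at its first line, adding no cost.
\item If $x>0$, it builds and submits exactly one refund transaction $tx_r$ on $\Shard{c}$. By the Location Atomicity theorem, $tx_r$ is committed for an honest user, so it adds exactly $\mathcal{B}$.
\end{itemize}
Together these give $\mathbf{1}[x>0]\mathcal{B}$, and the total is $(x+\mathbf{1}[x>0])\mathcal{B}$.

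The main obstacle is making sure nothing else consumes consensus cost. One concern is a funding transaction that is "committed or rejected" after being dropped from a pool. Such a transaction was never included in a committed block, so it costs nothing. The other concern is the single refund transaction: it aggregates the inputs of all $x$ funding transactions, and I will note explicitly that it still has a unique input shard $\Shard{c}$, so it is one Quark transaction costing $\mathcal{B}$ rather than $x\mathcal{B}$.
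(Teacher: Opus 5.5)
Your proposal is correct and follows essentially the same counting argument as the paper: each committed funding transaction costs $\mathcal{B}$, and Algorithm~\ref{alg:location-fallback} adds exactly one refund transaction on $\Shard{c}$ precisely when $x>0$. Your case split on which transaction is rejected, and your remark that the aggregated refund is a single Quark transaction, spell out what the paper's two-line proof leaves implicit.
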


\begin{proof}
Along the failure path, if \(x\) transactions are committed, input shards should execute BFT consensus for \(x\) transactions, thus the IS-overhead is \(x\mathcal{B}\).
If the user requires location atomicity and there are any transactions to be refunded, executing Algorithm~\ref{alg:location-fallback} submits one additional refund transaction on the coordinator shard, bringing the total IS-overhead to \((x+\mathbf{1}[x>0])\mathcal{B}\).
\end{proof}

\begin{theorem}[Overhead of Importing Credit Batches]
For a block importing \(\nu_B\) credit batches, the logical intra-shard broadcast overhead incurred by importing credit batches is \(mw + \nu_B s_\pi\), where \(m\) denotes the number of shards, \(w\) is the byte size of an integer, and \(s_\pi\) is the size of a CSRD-generated certificate.
The vast majority of this data requires only broadcast, with only the \(mw\) portion participating in the consensus protocol.
\end{theorem}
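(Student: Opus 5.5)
The plan is a direct accounting over the proposal and verification procedure of Algorithm~\ref{alg:quark} (lines~\ref{alg:quark:proposing:begin}--\ref{alg:quark:verifying:end}). I isolate exactly what a proposer must disseminate, beyond the ordinary transaction payload, so that replicas can import credit batches. I then split that data into a consensus-bound part and a broadcast-only part. There are two sources. The first is the import frontier \(\Block{i}{h}.F\), which is a header field of the block. The second is the validation witness \(W\) sent alongside the block in \(\langle\Block{i}{h},W\rangle\).

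For the first term, \(F\) is a vector indexed by shards, with one entry per shard (the self entry \(F[i]=0\) is kept for uniformity). Each entry is a block height, encoded as an integer of \(w\) bytes, so \(F\) has size \(mw\). Because \(F\) is a field of the block, it is hashed into the block and voted on. It therefore participates in consensus, and its size does not depend on \(\nu_B\). The point to stress is that the credit batches \(\DeltaSet{k}{i}{r}\) themselves are \emph{not} part of the block. They reach replicas through CSRD delivery into the local credit pool (line~\ref{alg:quark:csrd:begin}). The block only names which heights are imported, via the range \([\Block{i}{h-1}.F[k],\Block{i}{h}.F[k])\), so importing any number of batches costs only these \(m\) integers inside consensus.

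For the second term, the proposal loop adds exactly one certificate \(\DeltaProof{k}{i}{r}\) to \(W\) for each newly imported height \(r\) of each source \(k\). By definition, the total count of such pairs is \(\nu_B\). Each certificate has size \(s_\pi\), so \(|W|=\nu_B s_\pi\). As the protocol description states, \(W\) is broadcast once with the proposal and is neither agreed upon nor stored. It only lets each replica run \(\csrd{k}{i}.\Verify{r,\DeltaProof{k}{i}{r}}\) locally before waiting for the batch. This portion is therefore pure broadcast. Summing the two parts gives \(mw+\nu_B s_\pi\), with only \(mw\) entering consensus.

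The main obstacle is being careful about what "logical" overhead excludes. The CSRD traffic that actually delivers \(\DeltaSet{k}{i}{r}\) is cross-shard overhead accounted elsewhere, not intra-shard broadcast. Applying the credits at commit time is local computation. I will state these exclusions explicitly. I will also note that correctness of using \(W\) only as a broadcast, with no agreement on it, follows from the Validity theorem: each honest voter checks the certificates itself, so agreement on \(F\) alone fixes the imported set.
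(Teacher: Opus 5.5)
Your proposal is correct and matches the paper's proof: the import frontier \(F\) contributes \(mw\) bytes as a block field that goes through consensus, and the witness \(W\) contributes one certificate of size \(s_\pi\) per imported batch, broadcast only, for a total of \(mw+\nu_B s_\pi\). Your explicit exclusions are a reasonable clarification that the paper leaves implicit: CSRD delivery traffic is counted as cross-shard overhead, and applying the credits is local computation.
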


\begin{proof}
Each block stores an import frontier consisting of an integer for each shard to participate in consensus, incurring a total consensus overhead of \(mw\).
Furthermore, during block proposal, the replica broadcasts cryptographic proofs for all \(\nu_B\) imported credit batches in the block, contributing an additional broadcast overhead of \(\nu_B s_\pi\).
Consequently, the total intra-shard broadcast overhead sums to \(mw + \nu_B s_\pi\).
\end{proof}

Assuming a standard integer size of approximately 8~Bytes and a Threshold BLS (TBLS) signature certificate size of approximately 48~Bytes, in a system with 100 shards where all shards produce blocks at comparable rates and each block contains cross-shard outputs destined for every other shard, the importation overhead is approximately 5.42~KB.
This accounts for merely 0.26\% of a classic 2~MB block in Bitcoin.
In practice, because empty credit batches can be compressed using the scheme presented in Section~\ref{sec:discussion:emptybatch}, the actual overhead is even lower.

\subsection{Cross-Shard Overhead}

\begin{theorem}[CS-Overhead of \QuarkShard{}]
For a batch of \(r\) Quark transactions involving \(c\) distinct output accounts in distinct shards, the total logical data payload delivered by \QuarkShard{} to one output shard via the CSRD primitive is \(O(c\lambda)\).
Assuming the batch is sufficiently large (Lemma~\ref{lem:csrd-overhead}).
When employing the baseline CSRD implementation in Algorithm~\ref{alg:csrd}, the resulting CS-overhead is \(O(nc\lambda)\).
By using the optimized CSRD scheme in Algorithm~\ref{alg:csrd-logos}, the majority of cross-shard communication is offloaded to intra-shard networks, yielding a CS-overhead of \(O(c\lambda)\).
\end{theorem}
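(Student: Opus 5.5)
The argument has two layers: first bound the size of the payload that must cross the shard boundary, then apply the communication cost of each CSRD construction to that payload.

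\textbf{Step 1: bound the payload.} Fix a source shard \(\Shard{i}\) and a destination shard \(\Shard{j}\).

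By the commit rule of Algorithm~\ref{alg:quark}, every cross-shard output \(\anglebk{pk,v,\Shard{j}}\) of a transaction in a block of height \(h\) is added into the single entry \(\DeltaSet{i}{j}{h}[pk]\). The credit batch is therefore a map keyed by destination accounts, no matter how many of the \(r\) transactions contributed to each key.

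Count the entries over the whole batch of \(r\) transactions. The number of distinct keys destined for \(\Shard{j}\) is at most the number of distinct (account, shard) output pairs, which is at most \(c\).

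Each entry consists of a public key and an amount, both of size \(O(\lambda)\) bits. Hence the message \(M\) handed to \(\csrd{i}{j}.\Send{\cdot}\) has size \(O(c\lambda)\), which is the logical payload claimed in the statement.

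One subtlety concerns how the \(r\) transactions map to CSRD instances. If they span several heights, the map is still additive per height. The "sufficiently large batch" hypothesis is what lets us treat the relevant credit traffic as a constant number of CSRD instances. This makes the fixed per-instance costs (certificates, commitments, tags) a lower-order term, as in Lemma~\ref{lem:csrd-overhead}.

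\textbf{Step 2: baseline CSRD (Algorithm~\ref{alg:csrd}).} In this construction each of the \(n\) source replicas erasure-codes \(M\) with reconstruction threshold \(f+1\). It then sends its fragment, a commitment opening and a signature share to each of the \(n\) destination replicas.

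A fragment has size \(O(|M|/(f+1))=O(|M|/n)\). Summing over the \(n^2\) source–destination pairs gives \(n^2\cdot O(|M|/n + \lambda)=O(n|M| + n^2\lambda)\).

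I will invoke Lemma~\ref{lem:csrd-overhead}. For a sufficiently large batch, \(|M|\) dominates \(n\lambda\) and the commitment-proof terms, which leaves \(O(n|M|)=O(nc\lambda)\).

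\textbf{Step 3: optimized CSRD (Algorithm~\ref{alg:csrd-logos}).} For the Logos-style RPST construction, I will appeal to its stated property: only \(O(|M|)\) bytes plus an \(O(\lambda)\)-sized certificate cross the shard boundary. The dispersal of fragments and signature shares is instead carried out over intra-shard links. Substituting \(|M|=O(c\lambda)\) then gives a CS-overhead of \(O(c\lambda)\).

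\textbf{Main obstacle.} I expect the hard part to be making Step 2 rigorous: isolating exactly which additive per-instance terms Lemma~\ref{lem:csrd-overhead} absorbs under the large-batch assumption. In particular, the vector-commitment proofs and the \(n^2\) signature shares must be shown to be lower order.

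I would handle this by first stating the exact cost \(O(n|M| + n^2\lambda\log n)\). Then I would add the explicit condition \(|M| = \Omega(n\lambda\log n)\) as the meaning of "sufficiently large". Under that condition the bound collapses to \(O(nc\lambda)\).
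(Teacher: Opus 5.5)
Your Steps~1 and~2 match the paper's proof. Per-account aggregation in the commit rule of Algorithm~\ref{alg:quark} gives \(|M|=O(c\lambda)\). Lemma~\ref{lem:csrd-overhead} then turns the cost \(O(n|M|+n^2\lambda\log n)\) of Algorithm~\ref{alg:csrd} into \(O(nc\lambda)\) once \(|M|=\Omega(n\lambda\log n)\); this uses \(n=3f+1\), which you assume implicitly when writing \(O(|M|/(f+1))=O(|M|/n)\).

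Step~3, however, misdescribes Algorithm~\ref{alg:csrd-logos}. It is not \(O(|M|)\) bytes plus one \(O(\lambda)\) certificate that cross the boundary. Each of the \(n\) source replicas sends its paired destination replica a message carrying two pieces of metadata:
\begin{itemize}
\item its own opening \(\mu_\ell\), of size \(O(\lambda\log n)\);
\item its own copy of \(\rho\), a commit certificate plus a Merkle inclusion proof against the export root, of size \(O(\lambda\log(p+1))\).
\end{itemize}
The cross-shard cost is therefore \(O(|M|+n\lambda(\log n+\log(p+1)))\), which is not \(O(c\lambda)\) when \(c\) is small (take \(c=1\)). So the large-batch hypothesis is needed in Step~3 too, with the threshold \(|M|=\Omega(n\lambda(\log n+\log(p+1)))\) from the paper's lemma. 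Your condition \(|M|=\Omega(n\lambda\log n)\) does not absorb the \(\log(p+1)\) term. The repair is to run the same domination argument as in Step~2 with this stronger threshold.

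In Step~1 you ask the batch-size hypothesis to reduce the traffic to a constant number of CSRD instances; it cannot do that. Batch size says nothing about how many heights the \(r\) transactions span. Because aggregation happens only per height, \(r\) transactions spread over \(k\) blocks give up to \(kc\) credit entries and \(k\) copies of the per-instance metadata, and no size condition on \(|M|\) removes that factor. The paper takes a batch to be the transactions committed in one block, as in its evaluation. Then there is exactly one instance of \(\csrd{i}{j}\) per destination shard, and ``sufficiently large'' refers only to \(|M|\) dominating the metadata in Lemma~\ref{lem:csrd-overhead}. You should state the one-block assumption explicitly rather than attribute it to the large-batch hypothesis.
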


\begin{proof}
According to Algorithm~\ref{alg:quark}, the data delivered by \QuarkShard{} to destination shards via CSRD consists of output funds aggregated on a per-account basis across the \(r\) transactions.
For each involved account, the payload contains only the account identifier and an integer representing the net balance update.
Consequently, the total logical data payload delivered across shards is \(O(c\lambda)\).

By Lemma~\ref{lem:csrd-overhead}, using Algorithm~\ref{alg:csrd} as the concrete CSRD implementation incurs a CS-overhead of \(O(nc\lambda)\).
Furthermore, adopting Algorithm~\ref{alg:csrd-logos} reduces the CS-overhead to \(O(c\lambda)\).
\end{proof}

Note that if each transaction specifies multiple outputs, the number of output accounts may exceed the number of transactions.
However, rather than transmitting only a transaction identifier, traditional protocols are required to transmit the full transaction payload to the output shards.
In contrast, \QuarkShard{} obviates the need to transmit transaction contents by transmitting only the aggregated account credits, which strictly outperforms transaction-granularity transmission.

\subsection{Latency}
\label{sec:performance-latency}

Below, we analyze the processing latency of \QuarkShard{} for individual Quark transactions and full user requests.
For ease of analysis, let \(\BFTLatency\) bound the time elapsed from when a transaction is submitted to an honest node in a shard until it is committed via BFT consensus,
let \(\CSRDLatency\) bound the delivery latency of the CSRD primitive,
and let \(\ImportLatency\) bound the delay between a credit batch being delivered to a shard and its eventual importation into a committed block.

The processing of a single Quark transaction requires committing the transaction in its input shard and subsequently waiting for the credit batches containing its outputs to be delivered and imported by the target shard.
Its total processing latency is bounded by
\[L_{\mathrm{one}} \leq \BFTLatency + \CSRDLatency + \ImportLatency.\]
The same upper bound applies to user requests that involve only a single input shard, or whose inputs have all been prefunded to the coordinator shard.

For user requests spanning multiple input shards, the base compilation algorithm requires submitting two sequential rounds of Quark transactions, yielding a total latency bounded by
\[L_{\mathrm{base}} \leq 2(\BFTLatency + \CSRDLatency + \ImportLatency).\]

Although the latency of \QuarkShard{} when handling multi-input requests is higher than that of protocols such as 2PC (which typically require two rounds of BFT consensus and one round of cross-shard communication), this trade-off yields significantly better system throughput and scalability, while the latency can be effectively mitigated through mechanisms such as prefunding.

\section{Evaluation}
\label{sec:evaluation}

We implemented a prototype of \QuarkShard{} and deployed it on Amazon Web Services (AWS) to evaluate its empirical performance under real-world wide-area network (WAN) conditions.
The deployment spanned four AWS geographical regions (Virginia, California, Tokyo, and London), scaling up to $96$ replicas across $24$ shards.
To ensure a fair evaluation, we implemented 2PC and Kronos within the exact same experimental framework and benchmarked \QuarkShard{} against them.

\paragraph{Implementation Details.}
We implemented the core benchmarking framework as well as \QuarkShard{}, 2PC, and Kronos in Golang.
Cryptographic primitives include SHA-256 for cryptographic hashing and Threshold BLS (TBLS) for threshold signatures~\cite{boneh2001short,boldyreva2002threshold}.
Erasure coding is implemented using Reed-Solomon codes~\cite{reed1960polynomial}.
Intra-shard consensus is implemented using HotStuff~\cite{yin2019hotstuff}.

\paragraph{Deployment Details.}
Each replica was deployed on a dedicated AWS EC2 \texttt{c5.2xlarge} instance equipped with 8 vCPUs and 16~GB of RAM.
The outbound network bandwidth for each instance was capped at 100~Mbps.
The payload size for an individual transaction was set to 128~bytes, matching the average size of a standard native transfer transaction in Ethereum.

\subsection{Overall Performance}

\begin{figure}[t]
    \centering
    \includegraphics[width=0.47\textwidth]{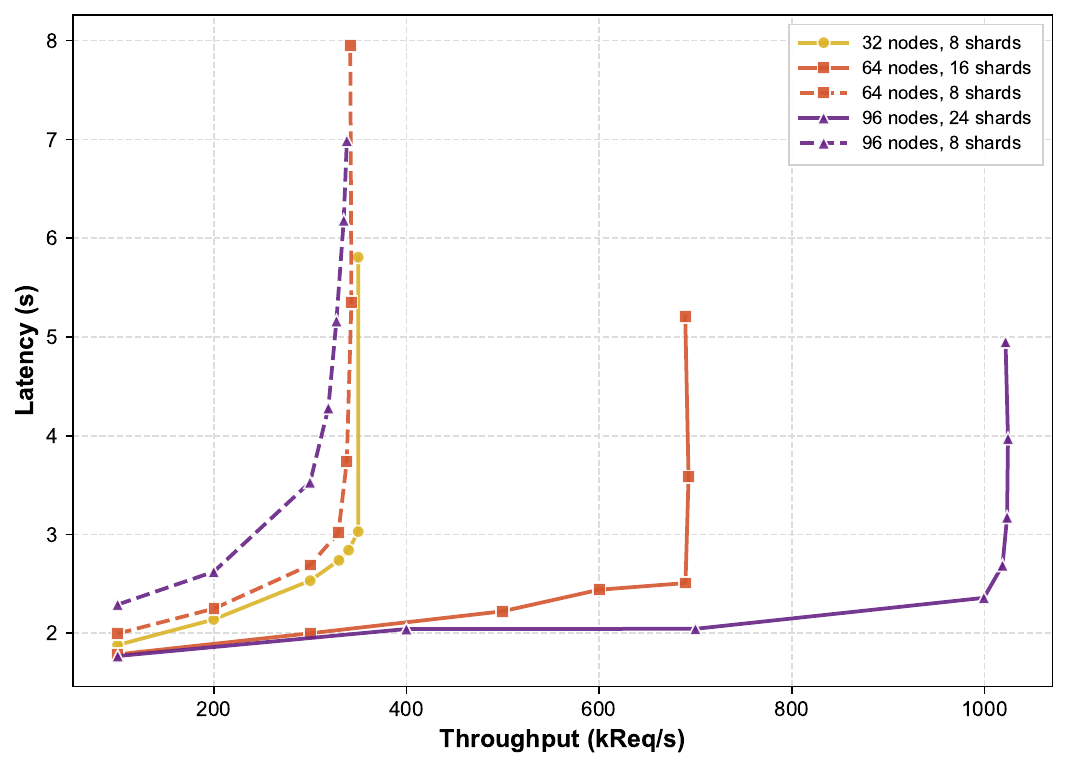}
    \caption{\QuarkShard{} Scalability.}
    \label{fig:scalability}
\end{figure}

\begin{figure}[t]
    \centering
    \begin{subfigure}[b]{0.23\textwidth}
        \centering
        \includegraphics[width=\textwidth]{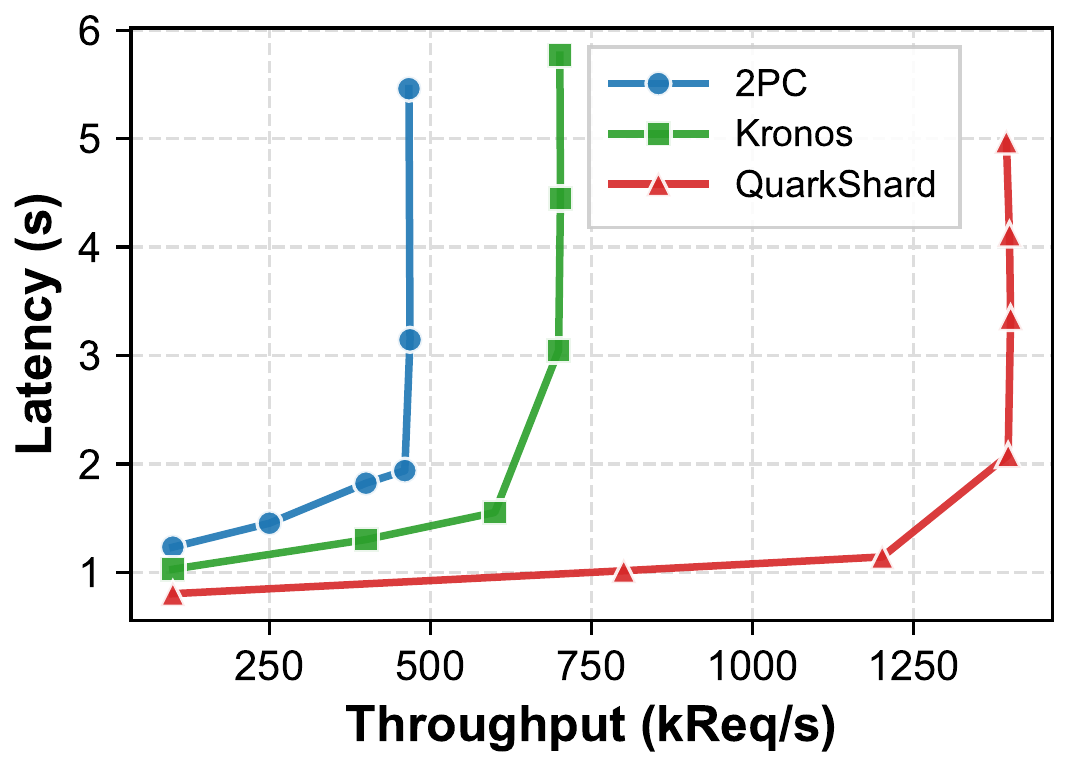}
        \caption{1 Input, 1 Output.}
        \label{fig:lt1i1o}
    \end{subfigure}
    \begin{subfigure}[b]{0.23\textwidth}
        \centering
        \includegraphics[width=\textwidth]{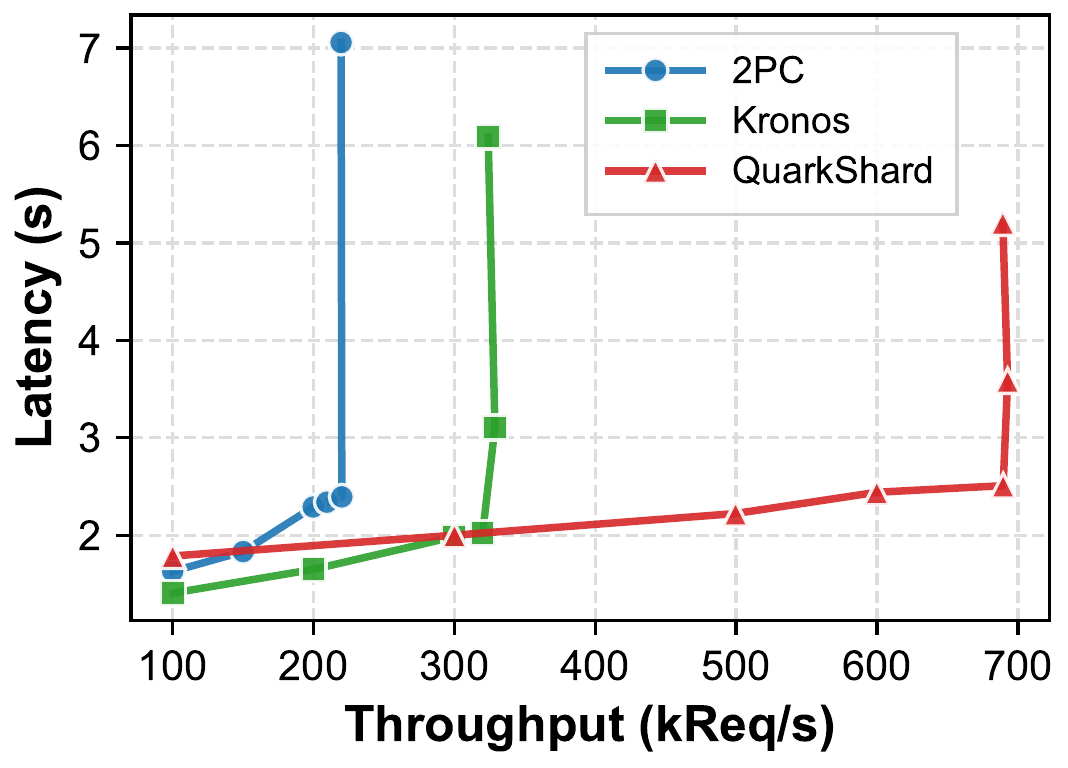}
        \caption{2 Input, 2 Output.}
        \label{fig:lt2i2o}
    \end{subfigure}
    \caption{\QuarkShard{} vs. Baselines.}
    \label{fig:lt}
\end{figure}

We first evaluate the scalability of \QuarkShard{} by measuring its throughput and latency variations under increasing workloads across different numbers of nodes and shards, using requests with two input shards and two output shards ($2$-input-$2$-output) as the benchmark workload (Figure~\ref{fig:scalability}).
Experimental results demonstrate that system throughput grows continuously with the number of shards and nodes, reaching a peak throughput of $1024$\,k requests per second under a $96$-node, $24$-shard deployment.
Concurrently, the end-to-end latency remains low, staying below $4$ seconds under normal workloads even when individual shards are large ($12$ nodes per shard).
This demonstrates that \QuarkShard{} achieves near-linear throughput scalability with respect to shard counts while consistently maintaining low processing latency.

Next, we benchmark the performance of \QuarkShard{}, Kronos, and 2PC under a $64$-node, $16$-shard deployment using two workload profiles: $1$-input-$1$-output requests (Figure~\ref{fig:lt1i1o}) and $2$-input-$2$-output requests (Figure~\ref{fig:lt2i2o}).
Experimental results indicate that across different workloads, \QuarkShard{} achieves $199.6\%\text{--}210.7\%$ of Kronos's maximum throughput and $299.1\%\text{--}314.8\%$ of 2PC's maximum throughput.

Under the single-input request workload, because \QuarkShard{} requires only one round of BFT consensus alongside one round of credit batch importation, its end-to-end request latency is significantly lower than the two-round BFT consensus required by 2PC and Kronos.
Specifically, the end-to-end latency of \QuarkShard{} is only $78.1\%$ of Kronos and $65.2\%$ of 2PC.
Under the multi-input request workload, \QuarkShard{} trades slightly higher latency for substantially higher throughput.
Specifically, \QuarkShard{} incurs an latency approximately $127\%$ of Kronos and $110\%$ of 2PC.

\subsection{Performance under Varying Numbers of Inputs and Outputs}

\begin{figure}[t]
    \centering
    \begin{subfigure}[b]{0.23\textwidth}
        \centering
        \includegraphics[width=\textwidth]{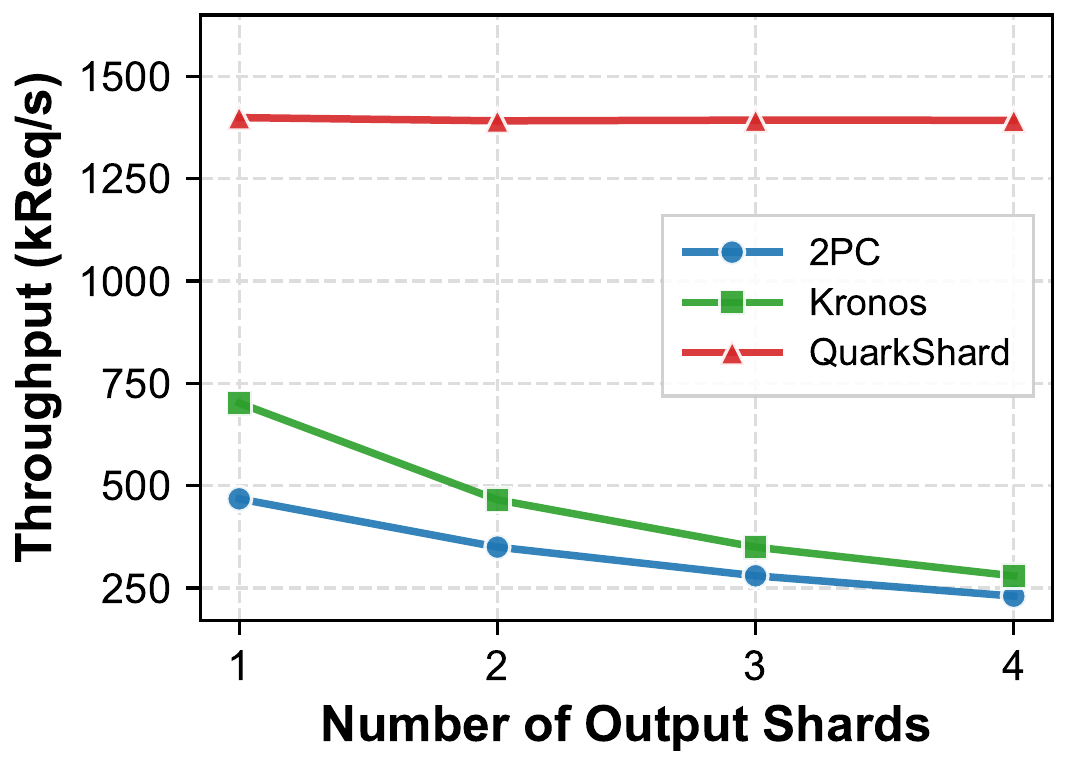}
        \caption{Throughput vs. \# of Outputs.}
        \label{fig:tpsvso}
    \end{subfigure}
    \begin{subfigure}[b]{0.23\textwidth}
        \centering
        \includegraphics[width=\textwidth]{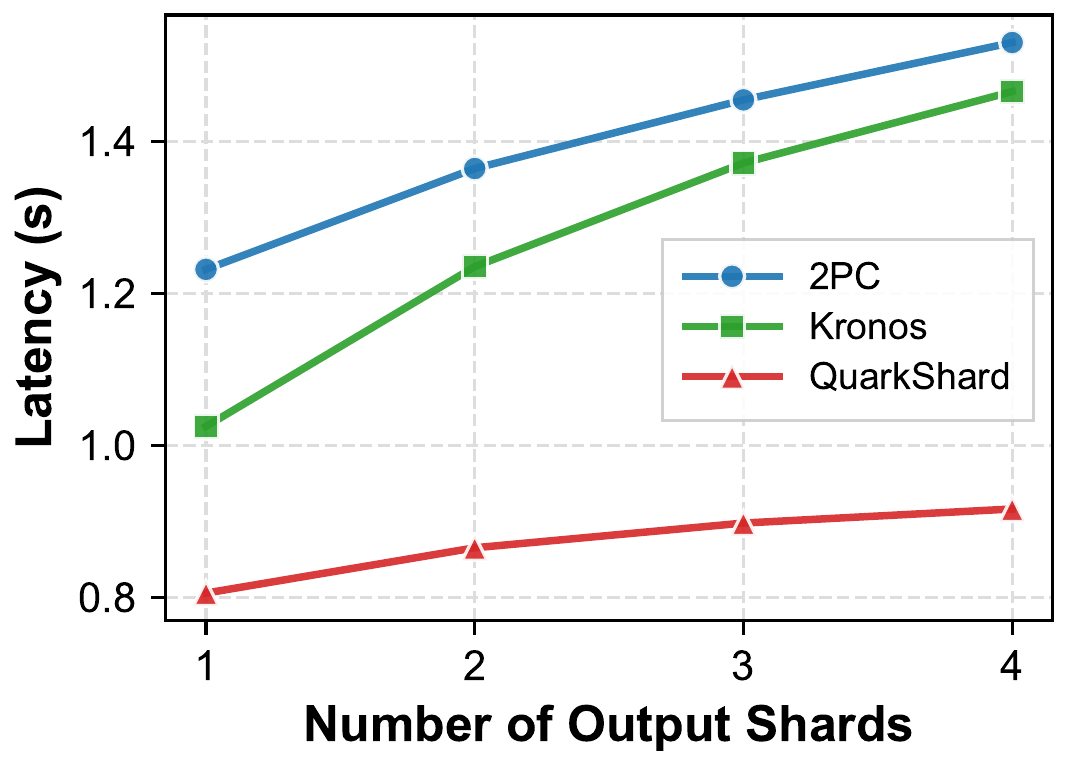}
        \caption{Latency vs. \# of Outputs.}
        \label{fig:latencyvso}
    \end{subfigure}
    \\
    \begin{subfigure}[b]{0.23\textwidth}
        \centering
        \includegraphics[width=\textwidth]{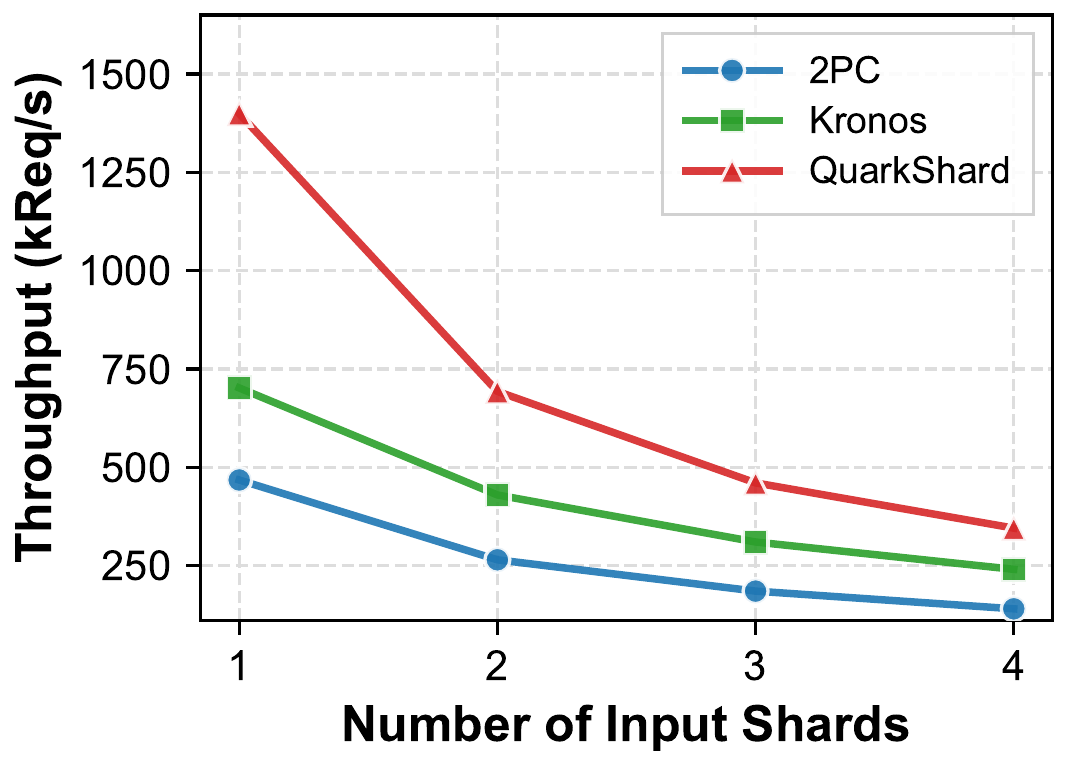}
        \caption{Throughput vs. \# of Inputs.}
        \label{fig:tpsvsi}
    \end{subfigure}
    \begin{subfigure}[b]{0.23\textwidth}
        \centering
        \includegraphics[width=\textwidth]{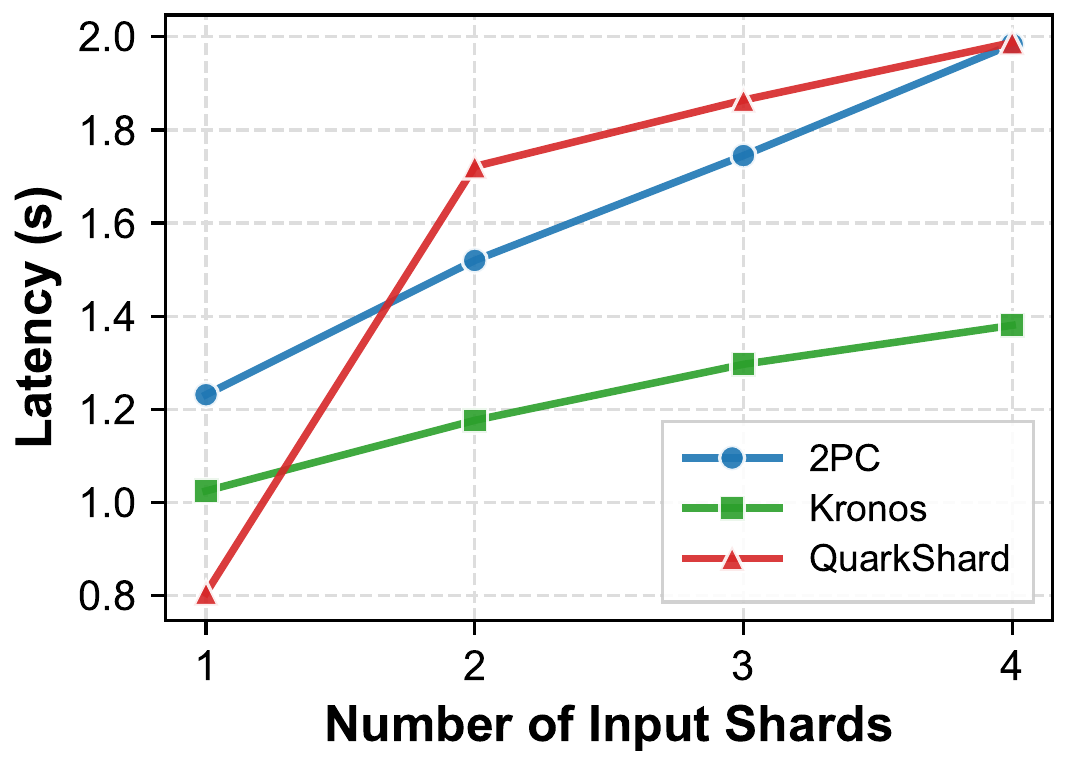}
        \caption{Latency vs. \# of Inputs.}
        \label{fig:latencyvsi}
    \end{subfigure}
    \caption{Performance under Different Workload Profiles.}
    \label{fig:workload}
\end{figure}

To evaluate the impact of different workload characteristics across protocols, we first fix the number of input shards to $1$ and measure how varying the number of output shards affects throughput and latency (Figure~\ref{fig:tpsvso} and \ref{fig:latencyvso}).
We then fix the number of output shards to $1$ and evaluate the system under varying numbers of input shards (Figure~\ref{fig:tpsvsi} and \ref{fig:latencyvsi}).
The experiments were conducted under a deployment of 64 nodes and 16 shards.
Experimental results demonstrate that \QuarkShard{} consistently outperforms 2PC and Kronos in throughput across all tested input and output shard configurations.
Furthermore, except for the transition from $1$ to $2$ input shards, the latency of \QuarkShard{} exhibits lower sensitivity and grows slower with respect to increasing input and output shard counts.

When fixing the number of input shards, variations in the number of output shards have virtually no impact on the throughput of \QuarkShard{}: scaling from $1$ to $4$ output shards results in a throughput degradation of less than $0.5\%$.
In contrast, the throughputs of 2PC and Kronos decrease by $50.9\%$ and $60.2\%$, respectively.
This highlights that in \QuarkShard{}, the computational and communication overhead associated with credit batch importation is negligible compared to transaction execution.
Regarding latency, as the number of output shards scales from $1$ to $4$, the latency of \QuarkShard{} increases by approximately $13.7\%$, compared to $24.2\%$ for 2PC and $43.0\%$ for Kronos.

When fixing the number of output shards, the throughput of \QuarkShard{} is strictly inversely proportional to the number of input shards: the throughput under $1$ input shard is approximately $2.02\times$ that under $2$ input shards and $4.05\times$ that under $4$ input shards.
This is consistent with the \(a\mathcal{B}\) IS-overhead of successful requests under the base compilation algorithm, which grows linearly with the number of input shards.
As for latency, when the number of input shards increases from $2$ to $4$, the latency of \QuarkShard{} increases by approximately $15.5\%$, whereas 2PC and Kronos experience increases of roughly $30.5\%$ and $17.3\%$, respectively.

\subsection{Cross-Shard Communication Overhead}

\begin{figure}[t]
    \centering
    \includegraphics[width=0.47\textwidth]{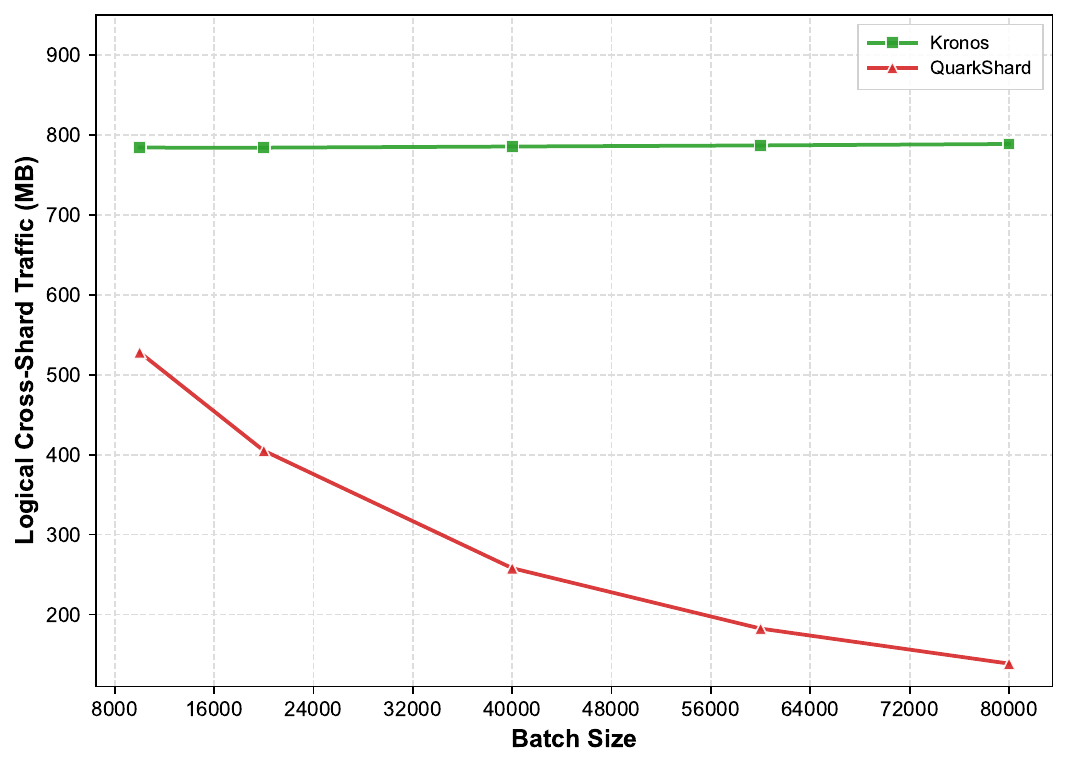}
    \caption{Cross-Shard Communication Overhead.}
    \label{fig:CS-overhead}
\end{figure}

To evaluate the effectiveness of aggregating transaction outputs on a per-account basis in reducing cross-shard communication overhead, we measure the total logical cross-shard communication payload across various batch sizes (i.e., the number of transactions committed per block).
The evaluation is conducted under a deployment of $64$ nodes across $16$ shards, with $1,000$ active accounts per shard.
Experimental results demonstrate that the logical cross-shard communication volume of \QuarkShard{} is consistently lower than that of Kronos. Furthermore, as the block batch size increases, per-account aggregation becomes increasingly effective, leading to a continuous reduction in amortized cross-shard communication payload.
This empirically validates the efficacy of cross-shard transaction outputs aggregating.

\section{Discussion}
\label{sec:discussion}

\subsection{Empty Credit Batch Compression.}
\label{sec:discussion:emptybatch}

In Algorithm~\ref{alg:quark}, we require that for each block and every external shard, a CSRD primitive invocation be executed regardless of whether the corresponding credit batch is empty.
Here, we introduce a straightforward optimization that eliminates the potential overhead of transmitting empty batches while preserving protocol security.

Specifically, each credit batch no longer represents outputs generated from a single block height, but rather spans a contiguous range of block heights.
We denote by \(\DeltaSet{i}{j}{s,h}[pk]\) a non-empty credit batch generated at block height \(h\), signifying that all blocks within the height interval \([s,h)\) in shard \(\Shard{i}\) contained no transaction outputs destined for shard \(\Shard{j}\).
Upon committing a block, replicas omit transmissions to shards with empty batches and instead send credit batches exclusively to shards with non-empty outputs, embedding the starting height \(s\) of the preceding sequence of empty heights into the batch header.
The destination shard can import \(\DeltaSet{i}{j}{s,h}[pk]\) if and only if all heights in \([0,s)\) have already been imported, which atomically effects the importation of all batches within the interval \([s,h]\).

This modification preserves the security of credit batch importation: for every non-empty height, honest replicas transmit exactly one identical credit batch derived from a consensus-committed block.
Consequently, honest nodes in the destination shard receive authenticated credit batches solely for non-empty heights.
Because each credit batch explicitly encodes range metadata, destination nodes maintain monotonic and contiguous importation progress, guaranteeing that every batch is imported exactly once.
Furthermore, since importing empty batches yields no observable state transitions, compressing empty batches into subsequent non-empty batches introduces no additional processing latency.

\subsection{Application on UTXO Model.}

\QuarkShard{} can be readily adapted to the UTXO model.
Specifically, by treating every input and output tuple \(\anglebk{pk,v,\Shard{i}}\) as a UTXO of value \(v\) assigned to account \(pk\) managed by shard \(\Shard{i}\), all preceding design logic and theoretical guarantees hold directly.
Instead of tracking explicit account balances, each replica maintains a local UTXO set, replacing balance additions and deductions with UTXO insertions and deletions.
Cross-shard credit batches remain aggregated on a per-account basis, represented as a single consolidated UTXO for each account, and are delivered to destination shards via the CSRD primitive.
When evaluating transaction validity, replicas verify whether the UTXOs consumed by the transaction exist within the local UTXO set.

\section{Conclusion}

We propose \QuarkShard{}, a novel sharding blockchain protocol that breaks away from the traditional cross-shard request processing paradigm that treats each user request as a monolithic entity.
By compiling and decomposing a request into independent Quark transactions, \QuarkShard{} processes cross-shard requests with reduced overhead while guaranteeing fund atomicity or location atomicity through a carefully designed request compilation algorithm.
This paradigm shift enables \QuarkShard{} to achieve intra-shard BFT execution overhead and cross-shard communication overhead below the theoretical lower bounds imposed by conventional paradigms.
Furthermore, \QuarkShard{} reduces the sharding-specific attack surface for cross-shard front-running and improves resilience to invalid-request flooding.
Comprehensive empirical evaluations demonstrate the low overhead and superior performance of \QuarkShard{} over state-of-the-art protocols.

\cleardoublepage
\appendix

\bibliographystyle{plainurl}
\bibliography{base}

\section{Related Works}

\subsection{Cross-Shard Requests Processing}

A substantial body of prior work has investigated the processing of cross-shard user requests in sharding blockchain systems, proposing various mechanisms to enhance performance~\cite{kokoris2018omniledger,al2017chainspace,zamani2018rapidchain,dang2019towards,wang2019monoxide,hellings2021byshard,huang2022brokerchain,liu2024kronos,liu2026logos,liu2024cherubim,das2020efficient,hong2021pyramid,hong2022scaling,jiang2024sharon,tian2026controlshard,li2023cochain}.

State-of-the-art approaches include Kronos~\cite{liu2024kronos} and Logos~\cite{liu2026logos}.
For a cross-shard user request, Kronos first executes BFT consensus across all input shards to consume the inputs.
The consumed inputs are then transmitted via a trusted cross-shard channel to the output shards, where they are temporarily staged.
Subsequently, output shards execute consensus to transfer all staged funds to the target accounts.
If invalid inputs are encountered, Kronos requires all shards that have already consumed funds to execute consensus on a refund transaction to revert the execution.
In contrast, Logos first broadcasts the user request to all involved shards, whereupon all input shards execute a cross-state determine reliable broadcast (csdRBC) protocol to verify the validity of all inputs.
The resulting validity determinations are then delivered to all involved shards via a trusted cross-shard channel.
If all inputs are valid, all input and output shards execute consensus on the user request to perform the fund deductions and additions.
If any input is invalid, all shards terminate the process directly without requiring further consensus, though the initial broadcast overhead is still incurred.
To optimize cross-shard communication, Logos introduces RPST, which offloads the majority of communication volume to intra-shard networks.

In addition, several other cross-shard request processing schemes have been proposed.
A widely adopted approach is Two-Phase Commit (2PC)~\cite{al2017chainspace,dang2019towards,hellings2021byshard,liu2024cherubim}.
For a cross-shard request, 2PC first executes BFT consensus across all input shards to lock funds, followed by a second consensus round across all input and output shards to actually deduct or credit balances.
Another common approach utilizes relay accounts or user coordinators~\cite{huang2022brokerchain,zamani2018rapidchain}:
to process a request, funds are first transferred via BFT consensus in all input shards to relay accounts or coordinator users in the output shards,
followed by a second consensus round in the output shards to transfer funds from the relay or coordinator accounts to the final destinations.
Monoxide~\cite{wang2019monoxide} and CoChain~\cite{li2023cochain} adopt a relay transaction mechanism analogous to relay accounts: a relay transaction first undergoes BFT consensus in the input shard to transfer funds toward the target shard, after which its output is transferred to the final destination within the target shard.

Nevertheless, these systems adhere to the traditional processing paradigm that treats each user request as a monolithic entity for consensus and execution.
Consequently, they cannot overcome the lower bounds of an IS-overhead of at least \((a+b)\mathcal{B}\) and a CS-overhead of at least \(\Omega(r\lambda)\)~\cite{liu2024kronos,liu2026logos}.

Other schemes introduce specialized sharding architectures to handle cross-shard requests.
Das et al.~\cite{das2020efficient} introduces a dedicated coordinator shard responsible for orchestrating cross-shard requests across different shards.
Tian et al.~\cite{tian2026controlshard} employs specialized shards to process cross-shard requests.
Hong et al.~\cite{hong2021pyramid,hong2022scaling} introduces a hierarchical sharding structure in which bridge shards cover multiple lower-level shards and coordinate their cross-shard requests.
Xu et al.~\cite{xu2023two} similarly employ a two-tiered sharding architecture, in which a set of small execution shards is overseen by a control shard to process user requests.
Jiang et al.~\cite{jiang2024sharon} proposed a method that dynamically merges pairs of shards and periodically rotates the pairings, permitting cross-shard requests to execute only when the involved shards are actively merged.

\subsection{MEV in sharding Blockchains}

Extracting Maximal Extractable Value (MEV) through transaction order manipulation has emerged as a major security challenge in blockchain systems~\cite{daian2020flash,qin2022quantifying}.

Order-fairness BFT protocols, such as Aequitas~\cite{kelkar2020order}, investigate mechanisms to prevent malicious leaders or proposers from arbitrarily reordering transactions observed by honest nodes.
However, these protocols primarily address local transaction ordering within a single consensus instance.
In sharding blockchains, achieving fair ordering for cross-shard requests typically necessitates specialized protocol designs. Prophet~\cite{hong2023prophet} adopts a Byzantine fault-tolerant deterministic ordering strategy to prevent conflicts among cross-shard transactions.

Furthermore, sharding blockchains introduce unique fairness challenges.
In conventional cross-shard protocols, a cross-shard request must reach consensus within its input shards before undergoing a separate consensus process within its output shards.
This sequential workflow inherently exposes cross-shard requests to higher latency compared to intra-shard transactions, thereby expanding the attack surface for front-running exploits.
Haechi~\cite{zhang2023front} identifies front-running attacks stemming from the execution window gap between input and output shard consensus instances in traditional cross-shard workflows, and proposes finalization fairness to align the processing and execution order across shards.

\section{CSRD Construction}
\label{app:csrd}

\subsection{Building Blocks}

We use the following standard building blocks. Throughout this section, \(q=n-f\) and \(k=f+1\).

\paragraph{Threshold signatures.}
A \((q,n)\) threshold signature scheme provides:
\begin{itemize}[leftmargin=*]
  \item \(\TSigSignShare{\ell}{x}\to s_\ell\), invoked by replica \(p_\ell\) to generate its signature share on \(x\);
  \item \(\TSigVerifyShare{\ell}{x}{s_\ell}\), which verifies a signature share;
  \item \(\TSigCombine{x}{\{s_\ell\}}\to \sigma\), which aggregates \(q\) valid signature shares from distinct replicas into a threshold signature \(\sigma\);
  \item \(\TSigVerify{x}{\sigma}\), which verifies a threshold signature \(\sigma\).
\end{itemize}
Combining \(q\) valid signature shares from distinct replicas yields an accepted signature.
Conversely, if fewer than \(q\) replicas contribute signature shares, generating an accepted signature is computationally infeasible.

\paragraph{Erasure coding.}
A \((k,n)\) erasure code provides:
\begin{itemize}[leftmargin=*]
  \item \(\ECEnc(M)\to\{e_1,\ldots,e_n\}\), which encodes a message \(M\) into \(n\) fragments;
  \item \(\ECDec(\{\anglebk{\ell,e_\ell}\}_{\ell\in I})\to M\), which reconstructs the original message \(M\) using a set of fragments indexed by \(I\), where \(I\) is an index set of size at least \(k\).
\end{itemize}
Any \(k\) distinct valid fragments guarantee the exact reconstruction of the original message. The encoding is deterministic, and each fragment has a size of \(O(|M|/k)\).

\paragraph{Commitments.}
A commitment scheme provides:
\begin{itemize}[leftmargin=*]
    \item \(\COMCommitOpen(\{e_1,\ldots,e_n\},\ell)\to(c, \mu_\ell)\), which returns a commitment \(c\) and an opening \(\mu_\ell\);
    \item \(\COMVerify(c,\ell,e_\ell,\mu_\ell)\), which verifies that \(e_\ell\) occupies position \(\ell\) in the committed vector.
\end{itemize}
We require that the commitment is deterministic.
And we require correctness and position binding: after \(c\) is fixed, no adversary can open the same position to two different values.

We keep this interface generic, but all commitment instances in this paper are instantiated via a Merkle tree~\cite{merkle1987digital} in which \(e_\ell\) serves as the \(\ell\)-th leaf node.
Under this construction, the commitment is the Merkle root, and an opening is a Merkle proof for the corresponding leaf.
Thus, \(c\) has a size of \(O(\lambda)\), and an opening has a size of \(O(\lambda\log n)\).

\paragraph{Intra-shard certification.}
The Logos-based construction additionally employs an intra-shard certification primitive. This primitive provides:
\begin{itemize}[leftmargin=*]
  \item \(\mathsf{Certify}_{i}(x)\to\rho\), invoked by replica \(p_\ell\) to produce a certificate for \(x\);
  \item \(\mathsf{CertVerify}(x,\rho)\), which verifies whether a certificate is valid.
\end{itemize}

An accepted certificate can be obtained by all replicas if and only if at least \(q\) replicas invoke \(\mathsf{Certify}_{i}(x)\). In general, this certification primitive does not introduce additional consensus rounds; instead, it can be seamlessly embedded into intra-shard BFT consensus. A typical instantiation involves embedding the hash of \(x\) into the header of a BFT proposal block, using the block commit certificate together with the Merkle proof of inclusion for \(x\)'s hash in the block header as the validity certificate \(\rho\).

\subsection{A Basic CSRD Instance}
\label{app:csrd:instance}

In Algorithm~\ref{alg:csrd}, each replica in the input shard broadcasts its threshold signature share along with its designated erasure-coded fragment corresponding to its replica index to all replicas in the target shard.
Simultaneously, cryptographic commitments are used to protect the integrity of the fragments.
Upon receiving \(n-f\) messages containing valid signature shares, a replica in the target shard can aggregate these shares to form a certificate.
Among these messages, at least \(n-2f \geq f+1\) originate from honest replicas; thus, the original data can be reconstructed using the accompanying erasure-coded fragments.

\subsection{CSRD Instance based on Logos}
\label{app:csrd:logos}

Algorithm~\ref{alg:csrd-logos} adopts the one-to-one cross-shard transmission pattern proposed by Logos~\cite{liu2026logos}.
Specifically, a source replica with index \(\ell\) transmits fragment \(\ell\) exclusively to the target replica sharing the same index \(\ell\), which subsequently propagates the fragment within the destination shard.
Unlike Algorithm~\ref{alg:csrd}, where target shard replicas aggregate the certificate, source replicas in Algorithm~\ref{alg:csrd-logos} first generate the certificate intra-shard before transmitting it to the destination.

\begin{algorithm}[t]
\footnotesize
\DontPrintSemicolon
\AlgTwoERole{\textbf{as the} \(\ell\)-th \textbf{replica} \(p_\ell\in\Shard{i}\)}
\AlgUpon{\(\csrd{i}{j}.\Send{\tau,M}\) is called\label{alg:csrd:send}}{
    \(s_\ell\gets\TSigSignShare{\ell}{\anglebk{i, j, \tau}}\).\;
    \(\bigbk{e_1,\ldots,e_n}\gets\ECEnc(M)\).\;
    \((c,\mu_\ell)\gets\COMCommitOpen(\bigbk{e_1,\ldots,e_n},\ell)\).\;
    Send \(\text{msg}_\ell=\anglebk{\tau,c,\ell,e_\ell,\mu_\ell,s_\ell}\) to every replica of \(\Shard{j}\).\;
}
\AlgTwoESep
\AlgTwoERole{\textbf{as a replica in} \(\Shard{j}\)}
\AlgUpon{receiving \(n-f\) \(\text{msg}_\ell\) from distinct source replicas \(p_\ell\), each satisfying \(\TSigVerifyShare{\ell}{\anglebk{i,j,\tau}}{s_\ell}\)}{
    \(\pi\gets\TSigCombine{\anglebk{i,j,\tau}}{\{s_\ell\}}\).\;
    Find \(I\) of \(f+1\) distinct indices whose messages have the same \(c\) and satisfy \(\COMVerify(c,\ell,e_\ell,\mu_\ell)\).\;
    \(M\gets\ECDec(\{(\ell,e_\ell)\}_{\ell\in I})\).\;
    Deliver \(\anglebk{M,\pi}\).\;
}
\AlgTwoESep
\AlgTwoERole{\textbf{as a replica in} \(\Shard{i}\) \textbf{or} \(\Shard{j}\)}
\AlgUpon{\(\csrd{i}{j}.\Verify{\tau,\pi}\) is called\label{alg:csrd:verify}}{
    \Return \(\TSigVerify{\anglebk{i,j,\tau}}{\pi}\).\;
}
\caption{A basic CSRD instance from \(\Shard{i}\) to \(\Shard{j}\) with tag \(\tau\).}
\label{alg:csrd}
\end{algorithm}

\begin{algorithm}[t]
\footnotesize
\DontPrintSemicolon
\AlgTwoERole{\textbf{as the} \(\ell\)-th \textbf{replica} \(p_\ell\in\Shard{i}\)}
\AlgUpon{\(\csrd{i}{j}.\Send{\tau,M}\) is called}{
    \(\bigbk{e_1,\ldots,e_n}\gets\ECEnc(M)\).\;
    \((c,\mu_\ell)\gets\COMCommitOpen(\bigbk{e_1,\ldots,e_n},\ell)\).\;
    \(\rho\gets\mathsf{Certify}_{i}(\anglebk{i,j,\tau,c})\).\;
    Send \(\text{msg}_\ell=\anglebk{\tau,c,\ell,e_\ell,\mu_\ell,\rho}\) to the \(\ell\)-th replica of \(\Shard{j}\).\;
}
\AlgTwoESep
\AlgTwoERole{\textbf{as a replica in} \(\Shard{j}\)}
\AlgUpon{receiving \(\text{msg}_\ell\) from the \(\ell\)-th replica of \(\Shard{i}\), satisfying \(\mathsf{CertVerify}_{i}(\anglebk{i,j,\tau,c},\rho)\) and \(\COMVerify(c,\ell,e_\ell,\mu_\ell)\)}{
    Broadcast \(\text{msg}_\ell\) to all replicas in \(\Shard{j}\).\;
}
\AlgUpon{receiving valid \(\text{msg}_\ell\) for a set \(I\) of \(f+1\) distinct indices, all with the same \(c\)}{
    \(M\gets\ECDec(\{(\ell,e_\ell)\}_{\ell\in I})\).\;
    \(\pi\gets\anglebk{c,\rho}\) for any valid \(\rho\) among these messages.\;
    Deliver \(\anglebk{M,\pi}\).\;
}
\AlgTwoESep
\AlgTwoERole{\textbf{as a replica in} \(\Shard{i}\) \textbf{or} \(\Shard{j}\)}
\AlgUpon{\(\csrd{i}{j}.\Verify{\tau,\pi}\) is called}{
    Parse \(\pi=\anglebk{c,\rho}\).\;
    \Return \(\mathsf{CertVerify}(\anglebk{i,j,\tau,c},\rho)\).\;
}
\caption{A Logos-based CSRD instance from \(\Shard{i}\) to \(\Shard{j}\) with tag \(\tau\).}
\label{alg:csrd-logos}
\end{algorithm}

For ease of proof, the algorithm specifies that certificate generation occurs after replicas initiate transmission.
However, in practical deployments, the certificate may be generated alongside intra-shard BFT consensus prior to transmission.
This does not compromise security, as our model requires that any honest replica observing a valid certificate will invariably invoke the transmission procedure.

\subsection{Analysis}

\begin{lemma}[CSRD security]
\label{lem:csrd-security}
Under authenticated reliable channels, \(n\geq 3f+1\), and the properties of the building blocks above, Algorithms~\ref{alg:csrd} and \ref{alg:csrd-logos} satisfy CSRD correctness and validity.
\end{lemma}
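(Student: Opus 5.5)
I would prove the lemma separately for each algorithm, and for each one argue validity first and then correctness, since correctness relies on facts established along the way to validity.

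For Algorithm~\ref{alg:csrd}, validity is direct. An accepted \(\pi\) is a \((q,n)\) threshold signature on \(\anglebk{i,j,\tau}\). By unforgeability it requires shares from at least \(q=n-f\) distinct source replicas. Since at most \(f\) of those are Byzantine, at least \(n-2f\geq f+1\geq 1\) honest replicas produced a share. An honest replica signs \(\anglebk{i,j,\tau}\) only inside \(\send\), so some honest replica invoked \(\send\).

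For correctness, assume all honest source replicas call \(\send(\tau,M)\) on the same \(M\).
\begin{enumerate}
\item Encoding and commitment are deterministic, so every honest replica computes the same \((e_1,\dots,e_n)\) and the same commitment \(c^\ast\).
\item By reliable authenticated channels, each honest destination replica eventually receives the \(n-f\) honest messages, each carrying a valid share. It can therefore combine a certificate \(\pi\), and \(\verify\) accepts it.
\item The main obstacle is the decoding step. The \(n-f\) messages the replica first uses may include up to \(f\) Byzantine ones, possibly carrying a different \(c\). However, at least \(n-2f\geq f+1\) of them are honest and carry \(c^\ast\) with valid openings, so a set \(I\) of \(f+1\) indices with a common commitment exists.
\item I must also rule out decoding to a wrong message. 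If the chosen common commitment is \(c^\ast\), position binding ensures every opened fragment equals the honest \(e_\ell\), and \(\ECDec\) returns \(M\).
\item If a Byzantine commitment \(c'\neq c^\ast\) also reached \(f+1\) valid openings, decoding could be ambiguous. I would resolve this by having the replica prefer, or wait for, the commitment backed by the signature-share senders until \(f+1\) of them agree. Alternatively I would argue that an honest-majority tie-break suffices, since at most \(f\) Byzantine indices exist and cannot form \(f+1\) indices under \(c'\) on their own. I expect this to be the delicate point, and I would state it explicitly, because it is exactly where \(n\geq 3f+1\) and position binding are used.
\end{enumerate}

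For Algorithm~\ref{alg:csrd-logos}, validity follows from the certification primitive. An accepted \(\rho\) on \(\anglebk{i,j,\tau,c}\) requires at least \(q\) invocations of \(\mathsf{Certify}_i\), hence at least one by an honest replica. Honest replicas certify only inside \(\send\).

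For correctness, I would proceed as follows.
\begin{enumerate}
\item All \(n-f\) honest source replicas compute the same \(c^\ast\) and all invoke \(\mathsf{Certify}_i\), so an accepted \(\rho\) is obtained.
\item Each honest source replica \(\ell\) sends \(\text{msg}_\ell\) to destination replica \(\ell\). Both endpoints are honest for at least \(n-2f\geq f+1\) indices, since at most \(f\) indices are corrupted on each side.
\item Those destination replicas verify the message and rebroadcast it, so every honest destination replica eventually holds \(f+1\) valid fragments under \(c^\ast\). By position binding these are the honest \(e_\ell\), so \(\ECDec\) yields \(M\).
\item The resulting \(\pi=\anglebk{c^\ast,\rho}\) passes \(\verify\).
\end{enumerate}
The same concern about a competing commitment arises here. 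It is neutralized because every accepted message must carry a certificate on its \(c\), and by the certification property an adversarial \(c'\) could be certified only if an honest replica certified it. Honest replicas certify only \(c^\ast\) for tag \(\tau\), so no valid message carries \(c'\).
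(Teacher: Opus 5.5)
Your proposal is correct and follows essentially the same argument as the paper: share and certificate counting for validity, and for correctness deterministic encoding, the $n-2f\geq f+1$ honest fragments (honest index pairs for Algorithm~\ref{alg:csrd-logos}), position binding, and erasure decoding. In step 5 for Algorithm~\ref{alg:csrd}, commit to your second option, which is exactly the paper's reasoning: any $f+1$ distinct source indices include an honest sender carrying $c^\ast$, so a competing $c'\neq c^\ast$ can never collect $f+1$ valid openings, and no tie-break or change to the algorithm is needed.
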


\begin{proof}
For Algorithm~\ref{alg:csrd}, suppose all honest source replicas invoke \(\Send{\tau,M}\) with the same \(M\).
Deterministic encoding and commitment make them send fragments under the same commitment \(c^*\).
Every honest destination replica eventually receives messages from all \(n-f\) honest source replicas and can therefore combine their signature shares.
Any \(n-f\) received messages contain at least \(n-2f\geq f+1\) honest messages, so a qualifying set of \(f+1\) messages exists.
Moreover, every set of \(f+1\) messages from distinct source replicas that shares a commitment contains an honest message.
Its commitment must consequently be \(c^*\); position binding makes every accepted fragment in that set the corresponding fragment of the codeword for \(M\).
Erasure-code correctness therefore reconstructs the unique \(M\).
This proves correctness.

For validity, an accepted threshold signature contains \(n-f\) distinct shares, at least \(n-2f\) of which came from honest source replicas.
Since an honest replica signs \(\anglebk{i,j,\tau}\) only while processing \(\csrd{i}{j}.\Send{\tau,M}\), at least one honest source replica invoked \(\Send{\tau,M}\) for this instance.
Notice that this argument does not require the signature to cover \(M\), a digest, or \(c\).

For Algorithm~\ref{alg:csrd-logos}, deterministic encoding again gives all honest source replicas the same \(c^*\), and certification correctness supplies an accepted certificate for it.
Among the \(n-f\) honest sources, at most \(f\) have a Byzantine replica at the corresponding destination index.
Hence at least \(n-2f\geq f+1\) correctly opened fragments reach honest paired destinations and are broadcast to every honest destination replica.
Position binding and erasure-code correctness make every such replica reconstruct the same \(M\).

For validity, an accepted certification proof has \(n-f\) participants, at least \(n-2f\) of which are honest; each honest participant certifies only while processing \(\Send{\tau,M}\).
Thus an accepted CSRD certificate implies that an honest source replica invoked \(\Send{\tau,M}\).
Certification agreement additionally prevents the same instance from certifying two different commitments.
\end{proof}

\begin{lemma}[CSRD cross-shard overhead]
\label{lem:csrd-overhead}
Ignoring lower-order authentication metadata when the encoded message dominates it, Algorithms~\ref{alg:csrd} and \ref{alg:csrd-logos} transmit
\[
    O(n^2|M|/(f+1))
    \quad\text{and}\quad
    O(n|M|/(f+1))
\]
bytes across shards, respectively. Under the standard configuration \(n=3f+1\), these become \(O(n|M|)\) and \(O(|M|)\).
\end{lemma}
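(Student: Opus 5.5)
The plan is a direct message-counting argument for each algorithm. I will count only the bytes that cross from \(\Shard{i}\) to \(\Shard{j}\), and bound the per-message size using the erasure-coding and commitment properties. First I fix the per-message size. For a \((k,n)\) erasure code with \(k=f+1\), each fragment \(e_\ell\) has size \(O(|M|/(f+1))\). The remaining fields of \(\text{msg}_\ell\) are lower-order authentication metadata, which the hypothesis lets us neglect when the encoded message dominates:
\begin{itemize}
\item the tag \(\tau\);
\item the Merkle root \(c\), of size \(O(\lambda)\);
\item the opening \(\mu_\ell\), of size \(O(\lambda\log n)\);
\item the signature share \(s_\ell\) or the certificate \(\rho\), of size \(O(\lambda)\).
\end{itemize}
So every cross-shard message has size \(O(|M|/(f+1))\).

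\textbf{Algorithm~\ref{alg:csrd}.} In the send handler, each of the \(n\) source replicas sends \(\text{msg}_\ell\) to every one of the \(n\) replicas of \(\Shard{j}\). That gives \(n^2\) cross-shard messages and a total of \(O(n^2|M|/(f+1))\) bytes. Byzantine source replicas could send more, but honest cost is what the overhead counts. The receiver side of Algorithm~\ref{alg:csrd} sends nothing across shards.

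\textbf{Algorithm~\ref{alg:csrd-logos}.} Replica \(p_\ell\) sends \(\text{msg}_\ell\) only to the \(\ell\)-th replica of \(\Shard{j}\), so there are \(n\) cross-shard messages and \(O(n|M|/(f+1))\) bytes. Two other parts of this algorithm must be excluded from the cross-shard count:
\begin{itemize}
\item the rebroadcast of \(\text{msg}_\ell\) inside \(\Shard{j}\), which is intra-shard traffic;
\item the \(\mathsf{Certify}_i\) call, which runs within \(\Shard{i}\) and is embedded in its consensus.
\end{itemize}

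Finally I substitute \(n=3f+1\). Then \(f+1=\Theta(n)\), so \(n^2/(f+1)=O(n)\) and \(n/(f+1)=O(1)\), giving \(O(n|M|)\) and \(O(|M|)\).

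The main obstacle is bookkeeping rather than mathematics. I must make precise that the Merkle openings, of size \(O(\lambda\log n)\), are dominated, so the bound needs \(|M|/(f+1)\gtrsim\lambda\log n\); I will state this as the meaning of the "encoded message dominates" assumption. I must also attribute the within-\(\Shard{j}\) rebroadcast in Algorithm~\ref{alg:csrd-logos} to intra-shard cost, consistent with the paper's definition of CS-overhead.
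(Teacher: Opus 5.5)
Your proposal is correct and takes essentially the same route as the paper's proof. You count $n^2$ versus $n$ cross-shard messages of fragment size $O(|M|/(f+1))$, charge the certification and the destination rebroadcast to intra-shard cost, and substitute $n=3f+1$. The only difference is bookkeeping: the paper keeps the metadata terms explicitly and takes $|\rho|=O(\lambda\log(p+1))$ rather than your $O(\lambda)$, since $\rho$ may include a Merkle inclusion proof against an export root. That lets it derive the concrete dominance threshold $|M|=\Omega(n\lambda(\log n+\log(p+1)))$, and your simplification is harmless under the lemma's hypothesis.
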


\begin{proof}
Each encoded fragment has size \(O(|M|/(f+1))\).
In Algorithm~\ref{alg:csrd}, every cross-shard message additionally contains the tag \(\tau\), source index \(\ell\), commitment \(c\), opening \(\mu_\ell\), and signature share \(s_\ell\).
Thus, the size of the metadata in each message is
\[
    \eta_{\mathsf{B}}(n)
    =
    |\tau|+\lceil\log n\rceil+|c|+|\mu_\ell|+|s_\ell|,
\]
and sending one such message from each of \(n\) source replicas to each of \(n\) destination replicas costs
\[
    O\left(\frac{n^2|M|}{f+1}+n^2\eta_{\mathsf{B}}(n)\right).
\]

Let \(p\) denote the number of entries covered by the export root used for certification, taking \(p=1\) when the certified value is committed directly in the block header.
In Algorithm~\ref{alg:csrd-logos}, a cross-shard message instead contains \(\tau\), \(\ell\), \(c\), \(\mu_\ell\), and the intra-shard certificate \(\rho\).
Hence, the size of the metadata in each message is
\[
    \eta_{\mathsf{L}}(n,p)
    =
    |\tau|+\lceil\log n\rceil+|c|+|\mu_\ell|+|\rho|.
\]
Only one message is transmitted along each of the \(n\) matched source--destination pairs, yielding a total cost of
\[
    O\left(\frac{n|M|}{f+1}+n\eta_{\mathsf{L}}(n,p)\right).
\]
The certification procedure and the subsequent destination-shard rebroadcast are intra-shard operations, so neither contributes to CS-overhead.

With the default instantiation, \(|\tau|,|c|,|s_\ell|=O(\lambda)\), while \(|\mu_\ell|=O(\lambda\log n)\).
The certificate \(\rho\) consists of a constant-size threshold-signature commit certificate and, when needed, a Merkle inclusion proof with respect to this export root.
Consequently, \(|\rho|=O(\lambda\log(p+1))\).
The tag and replica index are absorbed by these terms, yielding
\(\eta_{\mathsf{B}}(n)=O(\lambda\log n)\) and
\(\eta_{\mathsf{L}}(n,p)=O(\lambda(\log n+\log(p+1)))\).
Substitution gives
\[
    O\left(
        \frac{n^2|M|}{f+1}+n^2\lambda\log n
    \right)
\]
and
\[
    O\left(
        \frac{n|M|}{f+1}
        +n\lambda(\log n+\log(p+1))
    \right).
\]
For \(n=3f+1\), these become \(O(n|M|+n^2\lambda\log n)\) and \(O(|M|+n\lambda(\log n+\log(p+1)))\).

Finally, if \(|M|=\Omega(n\lambda(\log n+\log(p+1)))\), the message terms dominate the metadata terms, leaving \(O(n|M|)\) and \(O(|M|)\), respectively.
\end{proof}

\end{document}